\documentclass[12pt,sort&compress]{elsarticle}

\makeatletter
\def\ps@pprintTitle{\let\@oddhead\@empty
  \let\@evenhead\@empty
  \def\@oddfoot{\reset@font\hfil\thepage\hfil}
  \let\@evenfoot\@oddfoot
}
\makeatother

\usepackage{amsmath}
\usepackage{amsfonts}
\usepackage{amssymb}
\usepackage{amsfonts}
\usepackage{amsthm}
\usepackage{amscd}
\usepackage{mathtools}
\usepackage{commath}
\usepackage{lmodern}
\usepackage{color}
\usepackage{a4wide}
\usepackage{bm} 
\usepackage{hyperref}

\usepackage[english]{babel}
\usepackage[IL2]{fontenc}
\usepackage[utf8]{inputenc}

\newtheorem{theorem}{Theorem}[section]
\newtheorem{lemma}[theorem]{Lemma}
\newtheorem{corollary}[theorem]{Corollary}
\theoremstyle{definition}
\newtheorem{definition}[theorem]
{Definition}
\theoremstyle{remark}
\newtheorem{remark}[theorem]{\upshape\bfseries Remark}
\newtheorem{example}[theorem]{\upshape\bfseries Example}

\newcommand*{\qi}{\mathbf{i}}
\newcommand*{\qj}{\mathbf{j}}
\newcommand*{\qk}{\mathbf{k}}
\newcommand*{\Cj}[1]{{#1}^\ast}
\newcommand{\residue}[2]{\operatorname{res}_{#2}#1}
\DeclareMathOperator{\IM}{Im}
\DeclareMathOperator{\RE}{Re}

\begin{document}

\begin{frontmatter}

\title{Three Paths to Rational Curves with Rational Arc Length}

\author[1]{Hans-Peter Schröcker}
\ead{hans-peter.schroecker@uibk.ac.at}
\address[1]{University of Innsbruck, Department of Basic Sciences in Engineering Sciences, Technikerstr.~13, 6020 Innsbruck, Austria}

\author[2]{Zbyněk Šír}
\ead{zbynek.sir@karlin.mff.cuni.cz}
\address[2]{Charles University, Faculty of Mathematics and Physic, Sokolovská 83, Prague 186 75, Czech Republic}

\begin{abstract}
    We solve the so far open problem of constructing all spatial rational curves with rational arc length functions. More precisely, we present three different methods for this construction. The first method adapts a recent approach of (Kalkan et al. 2022) to rational PH curves and requires solving a modestly sized system of linear equations. The second constructs the curve by imposing zero-residue conditions, thus extending ideas of previous papers by (Farouki and Sakkalis 2019) and the authors themselves (Schröcker and Šír 2023). The third method generalizes the dual approach of (Pottmann 1995) from planar to spatial curves. The three methods share the same quaternion based representation in which not only the PH curve but also its arc length function are compactly expressed. We also present a new proof based on the quaternion polynomial factorization theory of the well known characterization of the Pythagorean quadruples.
 \end{abstract}

\begin{keyword}
  Arc length function, Pythagorean hodograph curve, quaternions, quaternionic polynomial, residue, envelope.
\end{keyword}

\end{frontmatter}

\section{Introduction}
\label{sec:introduction}

Various applications require rational parametric curves and some of them profit from rationality of the curve's unit tangent field. This observation leads to the introduction of planar PH curves  (``Pythagorean hodograph curves'') in \cite{farouki90c} and spatial PH curves in \cite{farouki94a}. A vast majority of research papers focuses on \emph{polynomial} PH curves which have the advantage of a direct construction via integration of the hodograph \cite{farouki08}. All polynomial PH curves have also polynomial arc length functions. This fact was exploited in various important constructions, see e.g. \cite{farouki16,knez22,FAROUKI2021125653,knez24}. 

Let us now discuss the methods for the construction of rational PH curves. The problem is very different since integration of a rational hodograph does not necessarily generate a rational curve. This problem was bypassed in \cite{Pottmann95} where all planar rational PH curves were constructed via a dual geometrical approach. A generalization of this dual approach to the spatial rational PH curves appeared in \cite{FaroukiSir} and was later improved and exploited in \cite{FaroukiSir2,krajnc1,krajnc2,krajnc3}. A different approach of a more algebraic flavor was used in \cite{LEE2014689} to construct a particular kind of planar rational PH curves. All spatial rational PH curves (containing the planar ones) were algebraically constructed in \cite{kalkan22,schroecker23} via solving a system of linear equations and in \cite{SchroeckerSir:_optimal_interpolation} by imposing zero residue conditions on hodograph.

Another research topic needs to be discussed in connection with rational curves. Whereas all polynomial PH curves have polynomial arc length functions, only a \emph{proper subset} of the rational PH curves admits rational arc lengths. Determining this subset seems to be a rather difficult problem. From the algebraic point of view several initial results were of negative nature showing e.g.\ that rational curves other then straight lines cannot have a rational arc length parameterization \cite{FAROUKI1991151,FAROUKI2007238,SAKKALIS2009494}. Concerning the arc length function of the planar curves constructed in \cite{LEE2014689}, the analysis by \cite{FAROUKI20151} revealed that only in very particular cases these curves have a rational arc length.

Only two papers provide methods for construction of rational PH curves with rational arc-length function. In the planar case, \cite{Pottmann95} contains constructions for curves with rational arc length as evolutes of planar rational PH curves and as projections of curves of constant slope (curves whose tangents form a constant angle with a certain fixed direction, usually thought of as the vertical direction). The resulting formulas are however quite complicated due to the intrinsic dual nature of the construction. In \cite{FaroukiSakkalis2019} a special form of the rational hodograph is used to construct a subset of planar rational PH curves with rational arc length function. An extension of this method to the spatial PH curves is also hinted at and used to construct one example. To the best of our knowledge, this example is the only positive result about spatial rational curves with rational arc length so far. We devote Subsection~\ref{FarSak} of the present paper to a comparison of our results to \cite{FaroukiSakkalis2019}.

In this paper we present three different methods for constructing rational curves with rational arc length. All of them are \emph{universal} in the sense that they produce \emph{all} rational curves with that property. The first method adapts a recent approach of \cite{kalkan22} for computing rational PH curves and requires solving a modestly sized system of linear equations. The second constructs the curves by imposing linear zero-residue conditions on the hodograph, thus extending ideas of previous papers \cite{FaroukiSakkalis2019,SchroeckerSir:_optimal_interpolation}. The third method generalizes the dual approach of \cite{Pottmann95} from planar to spatial curves. We compare these three methods with previous approaches, discuss their advantages and disadvantages and also comment on aspects of genericity.

The remainder of this paper is organized as follows. In Section~\ref{sec:preliminaries} we review some elementary facts about polynomial and rational PH curves. We also give a new proof of the characterizations of PH quadruples \cite{Dietz,choi02,farouki02} that is based on the factorization theory of quaternion polynomials \cite{niven41,gordon65,hegedus13}. Two different algebraic methods for construction of curves with rational arc length are presented in Section~\ref{sec:algebraic-approach}. A short comparison of these methods with \cite{FaroukiSakkalis2019} is given in Subsection~\ref{FarSak}. The same quaternion expression $\mathcal A(t) (1 + \qi) \Cj{\mathcal A}(t)$ naturally occurs in both of our algebraic methods and motivates the analysis and geometric construction of the curves of constant slope in Section~\ref{sec:geometric-approach}. Finally, we discuss and compare the different approaches in Section~\ref{sec:comparison} and conclude the paper in Section~\ref{sec:conclusion}.

\section{Preliminaries}
\label{sec:preliminaries}

The rational parametric curve $\mathbf{r}(t)$ has the arc length function $s(t) = \int \sqrt{\dot{\mathbf{r}}(t) \cdot \dot{\mathbf{r}}(t)} \dif t$. It is called \emph{Pythagorean Hodograph} (or PH, for short) if the speed function $\sqrt{\dot{\mathbf r}(t) \cdot \dot{\mathbf r}(t)}$ is piecewise rational. This leads to a first simple but fundamental observation:

\begin{lemma}
  \label{lem:PH}
  A rational parametric curve $\mathbf{r}(t)$ with piecewise rational arc length function is necessarily a Pythagorean Hodograph curve.
\end{lemma}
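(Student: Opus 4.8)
The plan is to recover the speed function from the arc length function by differentiation. Suppose $\mathbf r(t)$ is rational and its arc length function $s(t)=\int\sqrt{\dot{\mathbf r}(t)\cdot\dot{\mathbf r}(t)}\,\dif t$ is piecewise rational, i.e.\ the parameter domain decomposes into finitely many subintervals on the interior of each of which $s$ coincides with a rational function $q(t)$. On such a subinterval $s$ is differentiable and, by the fundamental theorem of calculus applied to the defining integral, $\dot s(t)=\sqrt{\dot{\mathbf r}(t)\cdot\dot{\mathbf r}(t)}$; at the same time $\dot s(t)=\dot q(t)$ is rational. Hence the speed function equals the rational function $\dot q(t)$ on that subinterval, and since there are only finitely many pieces, $\sqrt{\dot{\mathbf r}\cdot\dot{\mathbf r}}$ is piecewise rational — which is precisely the defining property of a PH curve.

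Equivalently, and perhaps more transparently, one can argue at the level of the squared speed: $\dot{\mathbf r}(t)\cdot\dot{\mathbf r}(t)$ is a single rational function of $t$ because $\mathbf r$ is rational, and on each piece it agrees with $\dot q(t)^2$, a perfect square of a rational function. By the identity theorem for rational functions, two rational functions agreeing on a nonempty interval agree identically, so $\dot{\mathbf r}\cdot\dot{\mathbf r}$ is in fact globally the square of a rational function; in particular the speed is rational on the whole domain, which is marginally stronger than what the lemma asks for.

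I do not expect a genuine obstacle here. The only points requiring a little care are making the notion of ``piecewise rational'' precise and performing the differentiation on the interior of each piece (where $s$ is a bona fide rational function), and observing that there is no sign ambiguity since $\sqrt{\dot{\mathbf r}\cdot\dot{\mathbf r}}$ denotes the nonnegative branch, so $\dot q\ge 0$ on the relevant subinterval and $\dot{\mathbf r}\cdot\dot{\mathbf r}=\dot q^{\,2}$. In essence the lemma is the remark that if integrating a function produces a (piecewise) rational result, then differentiating back shows the integrand was the (piecewise) square root of a rational square, i.e.\ itself (piecewise) rational.
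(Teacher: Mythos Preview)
Your argument is correct. The paper does not actually give a proof of this lemma; it introduces it as ``a first simple but fundamental observation'' immediately after defining PH curves via the piecewise rationality of the speed function, leaving the one-line justification (differentiate the piecewise rational arc length to recover a piecewise rational speed) implicit. Your first paragraph spells out exactly that implicit step, so you are on the same track as the paper, just more explicit.

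Your second paragraph, invoking the identity theorem to upgrade ``piecewise rational speed'' to ``$\dot{\mathbf r}\cdot\dot{\mathbf r}$ is globally a rational square,'' goes a bit beyond what the lemma asserts and beyond what the paper uses. It is a valid and slightly sharper observation, but not needed for the lemma as stated.
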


A polynomial curve $\mathbf r(t)$ is PH if there exists a quadruple of polynomials $x(t)$, $y(t)$, $z(t)$, $\sigma(t)$ satisfying the Pythagorean condition
\begin{equation}
  \label{eq:1}
  \dot{x}(t)^2 +\dot{y}(t)^2 +\dot{z}(t)^2=\sigma(t)^2
\end{equation}
such that $\mathbf r(t) = [x(t), y(t), z(t)]$ \cite{farouki08}. A sufficient and necessary condition for satisfying \eqref{eq:1} was given in \cite{Dietz} (see also \cite{farouki02}) in terms of arbitrary polynomials $u(t)$, $v(t)$, $p(t)$, $q(t)$, $w(t)$ such that
\begin{equation}
  \label{eq:2}
  \begin{aligned}
    \dot{x}(t) &= w(t)[u^2(t)+v^2(t)-p^2(t)-q^2(t)],\\
    \dot{y}(t) &= w(t)[2u(t)q(t)+2v(t)p(t)],\\
    \dot{z}(t) &= w(t)[2v(t)q(t)-2u(t)p(t)],\\
    \sigma(t) &= w(t)[u^2(t)+v^2(t)+p^2(t)+q^2(t)].
  \end{aligned}
\end{equation}
Note that the polynomial $w(t)$ is a real factor of the derivative vector $\dot{\mathbf r}(t) = [\dot{x}(t), \dot{y}(t), \dot{z}(t)]$ as well as of the speed function $\sigma(t)$ and is usually omitted in applications. It must, however, be considered in order fully solve \eqref{eq:1} because there is no other way to produce real roots of~$\sigma(t)$ of odd multiplicity.

The representation result \eqref{eq:2} is relevant also for rational PH curves. By expressing the three derivative components using a common denominator it is obvious that any rational PH curve $\mathbf r(t)$ satisfies
\begin{equation}
  \label{eq:3}
  \dot{\mathbf r}(t)=\lambda(t)[\dot{x}(t), \dot{y}(t), \dot{z}(t)],
\end{equation}
for some polynomials $x(t)$, $y(t)$, $z(t)$ satisfying the Pythagorean condition \eqref{eq:1} and a rational function $\lambda(t)$. However, not all rational functions $\lambda(t)$ give rise to rational PH curve via integration of \eqref{eq:3}, because also certain logarithm and arctangent functions have rational derivatives. On the other hand if $\lambda(t)=w(t)$ is polynomial we get precisely \eqref{eq:2}, thus producing all polynomial PH curves.

It is convenient and customary in the context of PH curves (cf.~\cite{choi02b}) to represent spatial parametric curves and their derivatives as functions of a real variable with values in the algebra $\mathbb{H}$ of quaternions and in particular in its imaginary part (or vector part) $\IM \mathbb{H}$ which is identified with $\mathbb R^3$. We denote by $\mathbb{H}[t]$ the algebra of polynomials in $t$ and with coefficients in $\mathbb{H}$. Since $t$ is a real parameter, multiplication of quaternions is defined by the convention that $t$ commutes with all coefficients. In this way the polynomial hodograph $[\dot{x}(t), \dot{y}(t), \dot{z}(t)]$ would be identified with the quaternion polynomial
\begin{equation*}
  \mathbf F(t)=\dot{x}(t)\qi +\dot{y}(t)\qj + \dot{z}(t)\qk \in \IM \mathbb{H}[t].
\end{equation*}
Denote by $\Cj{\mathbf q}$ the conjugate quaternion and by $N\colon \mathbb{H} \to \mathbb{R}_{\ge 0}$, $\mathbf q \mapsto \mathbf q \Cj{\mathbf q}$ the quaternion norm function. Extending it to a map from $\mathbb{H}[t]$ into the set of positive univariate polynomials, Equation~\eqref{eq:1} then becomes $N(\mathbf F(t))=\sigma^2(t)$.

Let us recall the following definition (first presented in \cite[Definition 3.4]{kalkan22}) which technically simplifies the construction of PH curves.

\begin{definition}
  \label{def:1}
  We say that a quaternion polynomial $\mathcal A(t)$ is \emph{reduced with respect to $h \in \mathbb{H}$}, or \emph{$h$-reduced} for short, if it is free of non-constant real factors and of polynomial right factors with coefficients in the sub-algebra of $\mathbb{H}$ which is generated by $1$ and~$h$.
\end{definition}

The basic idea behind Definition~\ref{def:1} is to avoid spurious real factors in quaternionic polynomials of the shape $\mathcal A(t) h \Cj{\mathcal A}(t)$. Indeed, if $\mathcal A(t) = \mathcal B(t) \mathcal H(t)$ with $\mathcal H(t)$ in the sub-algebra generated by $1$ and $h$, then $h$ and $\mathcal H(t)$ commute so that
\begin{equation*}
  \mathcal A(t) h \Cj{\mathcal A}(t) =
  \mathcal B(t) h \Cj{\mathcal B}(t) \mathcal{H}(t) \Cj{\mathcal H}(t)
\end{equation*}
and $\mathcal H(t) \Cj{\mathcal H}(t)$ is a real polynomial. As a consequence of \cite[Proposition~2.1]{cheng16}, also the inverse implication is true: $\mathcal{A}(t)h\Cj{\mathcal{A}}(t)$ has a non-trivial real polynomial factor if and only if $\mathcal{A}(t)$ is not $h$-reduced. This can be checked algorithmically by showing that the coefficients of $\mathcal{A}(t)h\Cj{\mathcal{A}}(t)$ have a non-trivial~$\gcd$.

We continue by expressing the characterization of Pythagorean quadruples \eqref{eq:2} using quaternion polynomials \cite{choi02b} and present a new constructive proof based on the factorization theory of quaternion polynomials \cite{niven41,gordon65,hegedus13}.

\begin{lemma}
  \label{lem:factorization}
  Let $\mathbf F(t)\in \IM \mathbb{H}[t]$ be without real polynomial factors such that $N(\mathbf F(t))=\sigma^2(t)$. Then there exists $\mathcal A(t)\in \mathbb{H}[t]$, reduced with respect to $\qi$, so that
  \begin{equation*}
    \mathbf F(t)=\mathcal A(t) \qi \Cj{\mathcal  A}(t).
  \end{equation*}
\end{lemma}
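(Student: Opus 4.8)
The plan is to build $\mathcal{A}(t)$ from a factorization of $\mathbf{F}(t)$ in $\mathbb{H}[t]$ and then argue that the factor can be chosen $\qi$-reduced. First I would observe that since $N(\mathbf{F}(t)) = \sigma(t)^2$ is a perfect square, the real polynomial $N(\mathbf{F}(t))$ factors over $\mathbb{R}$ into irreducible quadratics and linear factors, each appearing to an even total power; grouping them, $\sigma(t)$ itself is a well-defined real polynomial up to sign. The factorization theory of quaternion polynomials \cite{niven41,gordon65,hegedus13} tells us that any $\mathbf{F}(t) \in \mathbb{H}[t]$ whose norm is $\sigma(t)^2$ admits a factorization into linear quaternionic factors whose norms are exactly the real quadratic (or squared-linear) factors of $N(\mathbf{F}(t))$. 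Because $\mathbf{F}(t)$ has no real polynomial factors, every real quadratic $q(t)$ dividing $N(\mathbf{F}(t))$ must divide it to an even power $2k$, and in a linear factorization the $2k$ quaternionic linear factors of norm $q(t)$ can be paired; the key algebraic fact (this is where the quaternion structure really enters) is that a product of two monic linear factors with the same real norm $q(t)$, say $(t-\Cj{\mathbf{a}})(t-\mathbf{a})$ type arrangements, can be rewritten so that $\mathbf{F}(t) = \mathcal{G}(t)\,\mathbf{c}\,\Cj{\mathcal{G}}(t)$ for some $\mathcal{G}(t) \in \mathbb{H}[t]$ and some constant $\mathbf{c} \in \IM\mathbb{H}$.

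Second, I would reduce the constant $\mathbf{c}$ to $\qi$. Since $\mathbf{F}(t) \in \IM\mathbb{H}[t]$ and the construction forces $\mathbf{c}\in\IM\mathbb{H}$ with $N(\mathbf{c})$ equal to the leading-coefficient normalization, we can scale so that $N(\mathbf{c}) = 1$, i.e.\ $\mathbf{c}$ is a unit imaginary quaternion. Any unit imaginary quaternion is conjugate to $\qi$: there is a unit quaternion $\mathbf{h}$ with $\mathbf{h}\,\qi\,\Cj{\mathbf{h}} = \mathbf{c}$. Then $\mathbf{F}(t) = \mathcal{G}(t)\mathbf{h}\,\qi\,\Cj{\mathbf{h}}\,\Cj{\mathcal{G}}(t) = (\mathcal{G}(t)\mathbf{h})\,\qi\,\Cj{(\mathcal{G}(t)\mathbf{h})}$, so $\mathcal{A}_0(t) := \mathcal{G}(t)\mathbf{h}$ is a quaternion polynomial with $\mathbf{F}(t) = \mathcal{A}_0(t)\,\qi\,\Cj{\mathcal{A}_0}(t)$.

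Third, I would upgrade $\mathcal{A}_0(t)$ to a $\qi$-reduced polynomial. Using the characterization recalled just before the lemma (the consequence of \cite[Proposition~2.1]{cheng16}): $\mathbf{F}(t) = \mathcal{A}_0(t)\,\qi\,\Cj{\mathcal{A}_0}(t)$ has a nontrivial real polynomial factor exactly when $\mathcal{A}_0(t)$ is not $\qi$-reduced, in which case $\mathcal{A}_0(t) = \mathcal{B}(t)\mathcal{H}(t)$ with $\mathcal{H}(t)$ in the subalgebra generated by $1$ and $\qi$, and then $\mathbf{F}(t) = \mathcal{B}(t)\,\qi\,\Cj{\mathcal{B}}(t)\cdot\mathcal{H}(t)\Cj{\mathcal{H}}(t)$ with $\mathcal{H}(t)\Cj{\mathcal{H}}(t)$ a real polynomial. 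Since by hypothesis $\mathbf{F}(t)$ has \emph{no} real polynomial factor, such a real factor $\mathcal{H}(t)\Cj{\mathcal{H}}(t)$ must be a nonzero constant, hence (after absorbing it) $\mathcal{H}(t)$ is constant and we may take $\mathcal{A}(t) = \mathcal{B}(t)$; peeling off real constant factors of $\mathcal{A}_0(t)$ as well, we obtain a $\qi$-reduced $\mathcal{A}(t)$ with $\mathbf{F}(t) = \mathcal{A}(t)\,\qi\,\Cj{\mathcal{A}}(t)$.

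The main obstacle I anticipate is the first step: constructing the factorization $\mathbf{F}(t)=\mathcal{G}(t)\,\mathbf{c}\,\Cj{\mathcal{G}}(t)$ and making the pairing of quaternionic linear factors precise. One must be careful that the linear factorization of $\mathbf{F}(t)$ respects the structure $N(\mathbf{F})=\sigma^2$ — different linear factorizations exist, and one must select one in which factors of equal norm can be grouped into conjugate pairs. A clean way is induction on $\deg\sigma$: split off a real irreducible quadratic $q(t)\mid\sigma(t)$, write $\mathbf{F}(t)=(t-\mathbf{a})\mathbf{G}_1(t)$ where $N(t-\mathbf{a})=q(t)$ exploiting that $\mathbf{F}$ has a root $\mathbf{a}$ with $\Cj{\mathbf{a}}$ a root of $q$; a short computation using $\mathbf{F}\in\IM\mathbb{H}[t]$ then shows $q(t)$ also right-divides $\mathbf{G}_1(t)$, so $\mathbf{F}(t)=(t-\mathbf{a})\widetilde{\mathbf{F}}(t)(t-\Cj{\mathbf{a}})$ with $N(\widetilde{\mathbf{F}})=(\sigma/q)^2$ — or, if $q$ appears to higher even power, iterate — and apply the induction hypothesis to $\widetilde{\mathbf{F}}(t)$. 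The base case $\sigma$ constant gives $\mathbf{F}(t)$ a constant imaginary quaternion, handled by the conjugation argument of the second step.
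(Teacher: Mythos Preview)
Your argument is correct and follows essentially the same inductive route as the paper: peel off a linear left factor and a conjugate linear right factor of $\mathbf F$, recurse on the middle polynomial, conjugate the constant base case to $\qi$, and deduce $\qi$-reducedness from the absence of real factors. Two details you leave implicit are exactly what the paper spells out: (i) after writing $\mathbf F(t)=(t-\mathbf a)\,\widetilde{\mathbf F}(t)\,(t-\Cj{\mathbf a})$ you must verify $\widetilde{\mathbf F}\in\IM\mathbb{H}[t]$ before invoking the induction hypothesis (the paper does this via $0=\mathbf F+\Cj{\mathbf F}=(t-\mathbf a)(\widetilde{\mathbf F}+\Cj{\widetilde{\mathbf F}})(t-\Cj{\mathbf a})$); and (ii) your ``short computation'' does not directly give the right factor as $t-\Cj{\mathbf a}$---the paper first extracts \emph{some} right factor $t-\mathbf b$ of $\mathbf G_1$ from $q\mid N(\mathbf G_1)$, then conjugates $\mathbf F=(t-\mathbf a)\overline{\mathbf F}(t-\mathbf b)$ and invokes uniqueness of linear left/right factors with prescribed irreducible norm to force $\mathbf b=\Cj{\mathbf a}$. (Minor slip: $\sigma$ can have no real linear factors, since a real root $t_0$ of $\sigma$ gives $N(\mathbf F(t_0))=0$, hence $\mathbf F(t_0)=0$ and $(t-t_0)\mid\mathbf F$.)
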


\begin{proof}
  Since $\mathbf F(t)$ has no real factors, the real polynomial $\sigma$ is the product of irreducible quadratic real polynomials and the degree $n=\deg \mathbf F$ is even. We pick one such factor $M_1(t)$ and assume without loss of generality that it is monic. There exists a quaternion $h_1 \in \mathbb{H}$ such that $t-h_1$ is a left factor of $\mathbf F(t)$, that is, $\mathbf F(t) = (t - h_1) \tilde{\mathbf F}(t)$ for some polynomial $\tilde{\mathbf F}(t) \in \mathbb{H}[t]$, and $M_1(t) = N(t-h_1)$ \cite[Lemma~3]{hegedus13}. Since $N(\mathbf F(t))$ is a square, $M_1(t)$ is also a factor of $N(\tilde{\mathbf F}(t))$ and a symmetric argument yields existence of a linear right factor $t - \overline{h}_1$ of $\tilde{\mathbf F}(t)$, that is $\mathbf F(t) = (t - h_1)\tilde{\mathbf F}(t) = (t - h_1) \overline{\mathbf F}(t) (t - \overline{h}_1)$ for some polynomial $\overline{\mathbf F}(t) \in \mathbb{H}[t]$. Now
  \begin{equation*}
    (t - \Cj{\overline{h}}_1) \Cj{\overline{\mathbf F}}(t) (t - \Cj{h}_1) =
    \Cj{\mathbf F}(t) =
    -\mathbf F(t) =
    -(t - h_1) \overline{\mathbf F}(t) (t - \overline{h}_1)
  \end{equation*}
  so that uniqueness of the linear left and right factors yields $h_1 = \Cj{\overline{h}}_1$. Moreover, we have
  \begin{equation*}
    0
    = 2\RE \mathbf F(t)
    = \mathbf F(t) - \Cj{\mathbf F}(t)
    = (t - h_1)(\overline{\mathbf F}(t) - \Cj{\overline{\mathbf F}}(t))(t - \Cj{h}_1)
  \end{equation*}
  whence $\overline{\mathbf F}(t) - \Cj{\overline{\mathbf F}}(t) = 0$ and $\overline{\mathbf F}(t) \in \IM\mathbb{H}[t]$. Proceeding inductively, we find $\mathbf F(t) = \mathcal A_0(t) f_n \Cj{\mathcal A}_0(t)$ where $\mathcal A_0(t) = (t - h_1) \cdots (t - h_{n/2})$, $\deg \mathbf{F} = n$, and $f_n$ is the leading coefficient of $\mathbf F(t)$. Since there exists $x \in \mathbb{H}$ such that $f_n = x\qi\Cj{x}$ \cite[Lemma~1]{SirC1}, the claim follows with $\mathcal A(t) = \mathcal{A}_0(t) x$. The thus constructed polynomial $\mathcal A(t)$ is $\qi$-reduced as otherwise $\mathbf F(t)$ had a real polynomial factor.
\end{proof}

Summarizing results so far, we state:

\begin{corollary}
  \label{cor:1}
  A spatial rational curve $\mathbf r(t)$ is a PH curve with piecewise rational arc length if and only if there exist a polynomial $\mathcal A(t)$, reduced with respect to $\qi$, and a rational function $\lambda(t)$ so that the integrals
  \begin{align}
    \label{eq:4}
    \mathbf r(t)&=\int \lambda(t) \mathcal A(t) \qi \Cj{\mathcal  A}(t) \dif t,\\
    \label{eq:5}
    s(t)&=\int \lambda(t) \mathcal A(t) \Cj{\mathcal  A}(t) \dif t
  \end{align}
  are both rational.
\end{corollary}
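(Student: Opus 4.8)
The plan is to combine Lemma~\ref{lem:PH}, Lemma~\ref{lem:factorization} and the representation~\eqref{eq:3} into a single chain of equivalences. I would argue in two directions.

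For the ``only if'' direction, suppose $\mathbf r(t)$ is a spatial rational PH curve with piecewise rational arc length. By definition the speed $\sqrt{\dot{\mathbf r}(t)\cdot\dot{\mathbf r}(t)}$ is piecewise rational, so writing the three rational components of $\dot{\mathbf r}(t)$ over a common denominator gives $\dot{\mathbf r}(t)=\lambda_0(t)\mathbf F_0(t)$ with $\mathbf F_0(t)\in\IM\mathbb H[t]$ and $\lambda_0(t)$ rational, exactly as in~\eqref{eq:3}; moreover $N(\mathbf F_0(t))$ is then a perfect square $\sigma_0^2(t)$ up to the rational factor $\lambda_0(t)^2$. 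Pulling all non-constant real polynomial factors out of $\mathbf F_0(t)$ and absorbing them into the scalar, I may assume $\mathbf F(t)\in\IM\mathbb H[t]$ has no real polynomial factor while still $N(\mathbf F(t))=\sigma^2(t)$ and $\dot{\mathbf r}(t)=\lambda(t)\mathbf F(t)$ for a (new) rational $\lambda(t)$. Lemma~\ref{lem:factorization} now produces a $\qi$-reduced $\mathcal A(t)$ with $\mathbf F(t)=\mathcal A(t)\qi\Cj{\mathcal A}(t)$, so $\dot{\mathbf r}(t)=\lambda(t)\mathcal A(t)\qi\Cj{\mathcal A}(t)$ and, since $N(\mathbf F(t))=\sigma^2(t)$ with $\sigma(t)=\mathcal A(t)\Cj{\mathcal A}(t)$, the speed is $\dot s(t)=\lambda(t)\mathcal A(t)\Cj{\mathcal A}(t)$ (the sign of $\lambda$ is chosen so that this is the arc length rate rather than its negative). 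Integrating, both~\eqref{eq:4} and~\eqref{eq:5} hold and are rational because $\mathbf r$ and $s$ were assumed rational.

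For the ``if'' direction, suppose $\mathcal A(t)$ is $\qi$-reduced, $\lambda(t)$ is rational, and the integrals in~\eqref{eq:4} and~\eqref{eq:5} are both rational. Then $\mathbf r(t)$ is rational by construction, and $\dot{\mathbf r}(t)=\lambda(t)\mathcal A(t)\qi\Cj{\mathcal A}(t)$ has $\dot{\mathbf r}(t)\cdot\dot{\mathbf r}(t)=N(\lambda(t)\mathcal A(t)\qi\Cj{\mathcal A}(t))=\lambda(t)^2\bigl(\mathcal A(t)\Cj{\mathcal A}(t)\bigr)^2$, using $N(\qi)=1$ and multiplicativity of $N$; hence the speed equals $|\lambda(t)|\,\mathcal A(t)\Cj{\mathcal A}(t)$, which is piecewise rational, and $\mathbf r$ is a PH curve. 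Its arc length function is then $\pm\int\lambda(t)\mathcal A(t)\Cj{\mathcal A}(t)\dif t = \pm s(t)$ up to an additive constant, which is rational by hypothesis.

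The routine points — reducing to the real-factor-free case, keeping track of the sign of $\lambda$, and the bookkeeping between $\sigma(t)$ and $\mathcal A(t)\Cj{\mathcal A}(t)$ — are straightforward. The only genuine content is the existence of the $\qi$-reduced quaternionic factorization, and that is precisely Lemma~\ref{lem:factorization}, which I may invoke; the main thing to be careful about is that it requires $\mathbf F(t)$ to have no real polynomial factor, so the argument must first strip such factors off and move them into the rational multiplier $\lambda(t)$ before applying the lemma.
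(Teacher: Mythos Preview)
Your proposal is correct and follows exactly the route the paper intends: the corollary is introduced with ``Summarizing results so far, we state'' and is not given a separate proof, so the implicit argument is precisely the combination of Lemma~\ref{lem:PH}, equation~\eqref{eq:3}, and Lemma~\ref{lem:factorization} that you have written out. Your careful handling of stripping real polynomial factors before invoking Lemma~\ref{lem:factorization} and of the sign of $\lambda(t)$ matches the paper's own Remark immediately following the corollary.
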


\begin{remark}
  Strictly speaking the term $|\lambda(t)|$ should occur in integral \eqref{eq:5} because of possible change of sign of $\lambda(t)$ at its real roots. This is also the reason why $\sqrt{\dot{\mathbf r}(t) \cdot \dot{\mathbf r}(t)}$ is required to be \emph{piecewise} rational in the definition of PH curves. Because the roots of $\lambda(t)$ correspond to the points without properly defined tangent vector we will from now on tacitly assume that the parameter domain is restricted to an interval with $\lambda(t)>0$ and use the word ``rational'' instead of ``piecewise rational''. Note also that the assumption of $\mathcal A(t)$ being reduced with respect to $\qi$ does not induce any loss of generality. It follows from the proof of Lemma~\ref{lem:factorization} that $w(t)$ is needed in \eqref{eq:2} only in order to produce real roots of $\sigma(t)$. In the case of rational curves these roots can be produced by $\lambda(t)$.
\end{remark}

Ensuring the rationality of the integral \eqref{eq:4} or bypassing integration by some other considerations is the main difficulty in the construction of rational PH curves. In the past, is was solved by various methods, e.g.\ in \cite{FaroukiSir,FaroukiSir2,kalkan22,schroecker23,SchroeckerSir:_optimal_interpolation}. The purpose of the present paper is to extend these methods so that both integrals \eqref{eq:4} and \eqref{eq:5} are rational; this has been addressed already in \cite{FaroukiSakkalis2019}.

\section{Algebraic Approaches to Rational Arc Length Curves}
\label{sec:algebraic-approach}

In this section we extend two recent algebraic approaches for the computation of rational PH curves to curves with a rational arc length function.

\subsection{First Way -- Solving a System of Linear Equations}
\label{sec:second-path}

In \cite{kalkan22,schroecker23}, the authors computed rational PH curves by an approach which can be seen as the most direct way possible. The parametric curve is represented explicitly as
\begin{equation*}
  \mathbf r(t)=\frac{\mathbf B(t)}{\alpha(t)},
\end{equation*}
where the numerator polynomial has three components ($\mathbf B(t) \in \IM\mathbb{H}[t]$) and $\alpha(t)$ is a real polynomial. In view of Corollary~\ref{cor:1}, this curve is PH if and only if there exists an $\qi$-reduced polynomial ${\mathcal A}(t) \in \mathbb{H}[t]$ so that $\dot{\mathbf r}(t)$ and $\mathbf F(t) = \mathcal A(t) \qi \Cj{\mathcal A}(t) \in \IM\mathbb{H}[t]$ are linearly dependent for each $t$. This dependence was equivalently characterized in \cite{kalkan22,schroecker23} by existence of a polynomial $\mu(t) \in \mathbb{R}[t]$ such that
\begin{equation}
  \label{eq:6}
  \alpha(t) \dot{\mathbf B}(t) - \dot\alpha(t) \mathbf B(t) = \mu(t) \mathbf F(t) = \mu(t) \mathcal A(t) \qi \Cj{\mathcal A}(t).
\end{equation}
Note, that there is no need to consider $\mu(t)$ to be rational (with non trivial denominator) thanks to $\mathcal A(t)$ being reduced with respect to $\qi$ which implies that $\mathbf F(t)$ is free of real factors. As we will see below, the polynomial $\lambda(t)$ of \eqref{eq:4} satisfies $\lambda(t)=\frac{\mu(t)}{\alpha^2(t)}$.

If, for any given polynomials $\mathcal A(t)$ and $\alpha(t)$, we fix the respective degrees of $\mathbf B(t)$ and $\mu(t)$, Equation~\eqref{eq:6} becomes a finite system of linear equations. Solving this system for the unknown coefficients of $\mathbf{B}(t)$ and $\mu(t)$, we obtain the rational PH curve $\mathbf r(t) = \frac{\mathbf B(t)}{\alpha(t)}$.\footnote{The solution for $\mu(t)$ is not required for the problem at hand and may be discarded. It can, however, be useful for some applications, c.f. \cite{SchroeckerSir:_optimal_interpolation} or our discussion in Section~\ref{sec:comparison}.} Note that it is not guaranteed that $\mathbf B(t)$ and $\alpha(t)$ are free of common real factors. In fact, there are strong conditions on $\alpha(t)$ and $\mathbf F(t)$ in order to allow for non-polynomial solutions \cite[Theorem~4.6]{kalkan22}.

Let us investigate the condition for $\mathbf r(t)$ to have a piecewise rational arc length. We have
\begin{equation}
  \label{eq:7}
  \dot{\mathbf r}(t) = \frac{\alpha(t) \dot{\mathbf B}(t) - \dot\alpha(t) \mathbf B(t)}{\alpha^2(t)} = \frac{\mu(t)\mathcal A(t) \qi \Cj{\mathcal A}(t)}{\alpha^2(t)}
\end{equation}
whence the arc length integral becomes
\begin{equation*}
  s(t) = \int \Vert \dot{\mathbf r}(t) \Vert \dif t
       = \int \vert \mu(t) \vert \alpha^{-2}(t) \mathcal A(t) \Cj{\mathcal A}(t) \dif t.
\end{equation*}
On intervals where $\mu(t)$ is strictly positive (the argument for polynomials that are negative everywhere is similar), this simplifies to
\begin{equation*}
  s(t) = \int \mu(t) \alpha^{-2}(t) \mathcal A(t) \Cj{\mathcal A}(t) \dif t.
\end{equation*}
For $s(t)$ to be rational, the integrand should be the derivative of a rational function, that is,
\begin{equation*}
  \dot{s}(t) = \frac{\beta(t)}{\alpha^2(t)}
  \quad\text{where}\quad
  \beta(t) = \mu(t) \mathcal A(t) \Cj{\mathcal A}(t).
\end{equation*}

Comparing with \eqref{eq:7}, we see that we need to augment Equation~\eqref{eq:6} with one additional condition in order to compute a suitable rational function $s(t)$. We write
\begin{equation}
  \label{eq:8}
  \alpha(t) \dot{\mathcal B}(t) - \dot\alpha(t)\mathcal B(t) = \mu(t)\mathcal A(t) (1 + \qi) \Cj{\mathcal A}(t).
\end{equation}
In contrast to \eqref{eq:6}, the right-hand side is no longer vectorial but an element of $\mathbb H[t]$ and the same is true for any of its solutions $\mathcal B(t)$. The rational curve $\mathcal R(t) = \frac{\mathcal B(t)}{\alpha(t)}$ lives in $\mathbb{H} \cong \mathbb R^4$, its vector part $\mathbf r(t) \coloneqq \IM\mathcal R(t)$ is a rational PH curve with rational arc length $s(t) = \RE\mathcal R(t)$. A summary of this procedure plus a condition for the resulting curve to be non-polynomial is in

\begin{theorem}
  \label{th:linear-system}
  Given polynomials $\alpha \in \mathbb R[t]$ and $\mathcal A(t) \in \mathbb{H}[t]$, solutions of \eqref{eq:8} for $\mathcal B(t) \in \mathbb{H}[t]$ and $\mu(t) \in \mathbb{R}[t]$ yield rational curves $\mathcal R(t) = \frac{\mathcal B(t)}{\alpha(t)}$ in $\mathbb{H}$ whose vector parts $\mathbf r(t) = \IM \mathcal R(t)$ have a rational arc length function. Assuming without loss of generality that $\mathcal{A}(t)$ is $\qi$-reduced, non-polynomial solutions exist for sufficiently high degree of $\mathcal{B}(t)$ if and only if there is a zero $z$ of $\alpha$ with multiplicity $n$ such that the set $\{\mathcal F_0, \mathcal F_1, \ldots, \mathcal F_n \}$ with coefficient $\mathcal F_i$ defined by $$\mathcal F(t)=\mathcal A(t) (1 + \qi) \Cj{\mathcal A}(t) = \sum_i (t - z)^i \mathcal F_i$$ is linearly dependent over $\mathbb{C}$. This condition is always satisfied if $n \ge 4$.
\end{theorem}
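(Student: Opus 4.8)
The plan is to establish the two assertions of the theorem separately. The first one is essentially the calculation that precedes it, made precise. The left-hand side of~\eqref{eq:8} equals $\alpha^{2}\dot{\mathcal R}$ for $\mathcal R=\mathcal B/\alpha$, and $\mathcal A(1+\qi)\Cj{\mathcal A}=\mathcal A\Cj{\mathcal A}+\mathcal A\qi\Cj{\mathcal A}=N(\mathcal A)+\mathbf F$ with $\mathbf F:=\mathcal A\qi\Cj{\mathcal A}\in\IM\mathbb H[t]$, so \eqref{eq:8} gives $\dot{\mathbf r}=\IM\dot{\mathcal R}=\mu\mathbf F/\alpha^{2}$ and $\RE\dot{\mathcal R}=\mu N(\mathcal A)/\alpha^{2}$. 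As $N(\mathbf F)=N(\mathcal A)^{2}$ is a perfect square, $\mathbf r$ is a PH curve (this is the situation of Corollary~\ref{cor:1} with $\lambda=\mu/\alpha^{2}$), and since $\lVert\dot{\mathbf r}\rVert=\lvert\mu\rvert N(\mathcal A)/\alpha^{2}=\lvert\RE\dot{\mathcal R}\rvert$, the rational function $\RE\mathcal R$ is an arc length function of $\mathbf r$, up to an additive constant and up to a sign that is constant on each interval free of roots of $\mu$.

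For the second assertion I would first reformulate it: a solution $(\mathcal B,\mu)$ of~\eqref{eq:8} is the same datum as a rational function $\mathcal R$ whose (complex) poles lie among the zeros of $\alpha$, whose pole order at each zero $w$ is at most the multiplicity $m_{w}$ of $w$ in $\alpha$, and for which $\alpha^{2}\dot{\mathcal R}=\mu\,\mathcal F$ with $\mathcal F:=\mathcal A(1+\qi)\Cj{\mathcal A}$; the solution is non-polynomial exactly when $\mathcal R$ genuinely has a pole at some zero $z$, say of multiplicity $n$. Because $\mathcal A$ is $\qi$-reduced, and the subalgebra generated by $1$ and $1+\qi$ coincides with the one generated by $1$ and $\qi$, the criterion recalled after Definition~\ref{def:1} shows that $\mathcal F$ is free of real polynomial factors; hence $\alpha^{2}\mid\mu$ characterises the polynomial solutions, and the content of the statement is about the non-polynomial ones. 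Now $\dot{\mathcal R}$ is the \emph{fixed} polynomial $\mathcal F$ times the scalar rational function $\mu/\alpha^{2}$, so the problem decouples over the zeros of $\alpha$ by partial fractions: a principal part prescribed at $z$ can be completed to a global solution by adding a polynomial and interpolating the Taylor jets of $\mu$ at the remaining zeros so that $\mu$ vanishes there to order $\ge 2m_{w}$, which kills those residues trivially since they depend homogeneously on the jet. This patching step---where the assumption that $\deg\mathcal B$ be sufficiently large enters, and where one must also handle the reality constraint linking the jet of $\mu$ at a non-real $z$ with the one at $\bar z$, as well as the degenerate coincidence $\mathcal A(z)=0$---is the part I expect to require the most care.

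It then remains to decide, for a single zero $z$ of multiplicity $n$ with $\mathcal F_{0}:=\mathcal F(z)\ne 0$, whether $\mathcal R$ can have a pole at $z$ with vanishing residue. Write $\alpha=(t-z)^{n}\gamma$ with $\gamma(z)\ne 0$ and $\mathcal F=\sum_{i}(t-z)^{i}\mathcal F_{i}$. Since the requirement that $\mathcal R$ have a pole of order $\le n$ at $z$ forces $\mu$ to vanish at $z$ to order at least $n-1$, one may write $\mu/\gamma^{2}=(t-z)^{n-1}\sum_{a\ge 0}r_{a}(t-z)^{a}$ with freely choosable $r_{0},r_{1},\dots$ (real if $z$ is real, complex after complexifying $\mathbb H$ otherwise). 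Then
\begin{equation*}
  \dot{\mathcal R}=\frac{\mu}{\alpha^{2}}\,\mathcal F
  =(t-z)^{-(n+1)}\Bigl(\sum_{a\ge 0}r_{a}(t-z)^{a}\Bigr)\Bigl(\sum_{i\ge 0}\mathcal F_{i}(t-z)^{i}\Bigr),
\end{equation*}
whose residue at $z$, that is, its coefficient of $(t-z)^{-1}$, equals $\sum_{a+i=n}r_{a}\mathcal F_{i}=r_{0}\mathcal F_{n}+r_{1}\mathcal F_{n-1}+\dots+r_{n}\mathcal F_{0}$. This can be made to vanish for some $(r_{0},\dots,r_{n})\ne 0$ precisely when $\{\mathcal F_{0},\mathcal F_{1},\dots,\mathcal F_{n}\}$ is linearly dependent over $\mathbb C$ inside $\mathbb H\otimes_{\mathbb R}\mathbb C\cong\mathbb C^{4}$; moreover, as $\mathcal F_{0}\ne 0$, any such relation automatically has $r_{a}\ne 0$ for some $a<n$, which is exactly what forces $\mathcal R$ to have a genuine pole at $z$. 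This gives the claimed equivalence, and the last sentence follows because $n+1$ vectors of a $4$-dimensional complex vector space are dependent as soon as $n+1>4$.
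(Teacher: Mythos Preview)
Your argument is correct and follows the same route as the paper's: the paper derives Theorem~\ref{th:linear-system} as a short corollary of Theorem~\ref{th:zero-residue}, noting that the residue conditions~\eqref{eq:13} at a zero of multiplicity $n$ admit a non-trivial solution precisely when the Taylor coefficients $\mathcal F_0,\dots,\mathcal F_n$ are $\mathbb C$-linearly dependent in $\mathbb H\otimes_{\mathbb R}\mathbb C\cong\mathbb C^4$. You rederive this residue computation directly rather than quoting it, and are in fact more explicit than the paper about the pole-order constraint forcing $\mu$ to vanish to order $n-1$, the role of $\mathcal F_0\neq 0$ in guaranteeing a genuine pole, and the patching over the remaining zeros of~$\alpha$.
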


\begin{remark}
  Theorem~\ref{th:linear-system} is an extension of \cite[Theorem~4.6]{kalkan22} with the expression $\mathcal F(t) = \mathcal{A}(t)\qi\Cj{\mathcal A}(t)$ replaced by $\mathcal F(t) = \mathcal{A}(t)(1+\qi)\Cj{\mathcal A}(t)$. Consequently, the coefficients of the Taylor expansion of $\mathcal F(t)$ at $t = z$ live in $\IM \mathbb{H} = \mathbb R^3$ and the condition for existence of non-polynomial solutions in \cite[Theorem~4.6]{kalkan22} generically requires only a zero of multiplicity $n \ge 3$.
\end{remark}

A straightforward proof of Theorem~\ref{th:linear-system} along the lines of the proof of \cite[Theorem~4.6]{kalkan22} would be possible but, in the context of this article, is unnecessarily long. We prefer to view Theorem~\ref{th:linear-system} as a consequence of Theorem~\ref{th:zero-residue} below and provide a short proof in Section~\ref{sec:zero-residue}.

\begin{example}
  \label{ex:linear-system}
  Let us consider the polynomials
  \begin{equation}
    \label{eq:9}
    \mathcal{A}(t) = t^2 + \qk t + \qi + \qj
    \quad\text{and}\quad
    \alpha(t) = t^4.
  \end{equation}
  Non-constant solutions require $\deg \mathbf{B}(t) \ge 8$. Using this minimal degree, the solution is unique up to scaling and translation. We compute it not by solving \eqref{eq:8} directly but by equating the coefficients of powers of $t$ in the $2 \times 2$ minors of the matrix with columns $\alpha(t)\dot{\mathcal B}(t) - \dot{\alpha}(t)\mathcal B(t)$ and $\mathcal A(t) (1 + \qi) \Cj{\mathcal A}(t)$. (This is a more straightforward computation but we found it less suitable for theoretically analyzing the resulting system of linear equations.) One solution is
  \begin{multline}
    \label{eq:10}
    \mathbf r(t) = \frac{1}{t^4}
    \bigl(
    75 \qi t^8
    - 100 (\qi - 2\qj)t^7
    - 300 (\qi + \qj + \qk)t^6
    - 600 (\qj - 2 \qk)t^5\\
    - 32 (3 \qi + 5 \qj - 15 \qk)t^4
    - 300 ( \qi - 4 \qj)t^3
    - 150 (\qi - 2 \qj)t^2
    + 200 (\qj + \qk)t
    + 150 \qj
    \bigr).
  \end{multline}
  All other solutions are obtained from \eqref{eq:10} by scaling and translation.
\end{example}

\subsection{Second Way -- Zero Residue Conditions}
\label{sec:zero-residue}

The paper \cite{SchroeckerSir:_optimal_interpolation} presents a method to compute rational PH curves by directly studying the rationality of the integral \eqref{eq:4} which can be insured by imposing ``zero residue'' conditions on the integrand. These are formulated as linear equations in terms of the partial fraction coefficients of the rational function $\lambda(t)$ of Equations~\eqref{eq:4} and \eqref{eq:5}. More precisely, performing partial fraction decomposition, we can write
\begin{equation}
  \label{eq:11}
  \lambda(t) = p(t) + \sum_{i=1}^m \sum_{j = -k_i}^{-1} \lambda_{i,j}(t - \beta_i)^j
\end{equation}
where $\prod_{i=1}^m(t-\beta_i)^{k_i}$ is the denominator of $\lambda(t)$ in reduced form, $\beta_1$, $\beta_2$, \ldots, $\beta_m \in \mathbb C$ are its pairwise different zeros and $k_1$, $k_2$, \ldots, $k_m$ their respective multiplicities, $p(t)$ is a polynomial and $\lambda_{i,j}$ are uniquely determined complex coefficients. The $m$ linear zero residue conditions that are equivalent to the rationality of the integral \cite[Equation~(7)]{SchroeckerSir:_optimal_interpolation} read
\begin{equation}
  \label{eq:12}
  \lambda_{i,-1}\mathbf f_{i,0} + \lambda_{i,-2}\mathbf f_{i,1} + \lambda_{i,-3}\mathbf f_{i,2} + \cdots + \lambda_{i,-n-1}\mathbf f_{i,n} = 0,
  \quad i \in \{ 1, 2, \ldots, m\}
\end{equation}
where $n=\deg \mathbf F(t)$ and $\sum_{j=0}^n \mathbf f_{i,j}(t - \beta_i)^j$ is the Taylor polynomial of $\mathbf F(t)$ at~$\beta_i$. Note that even if \eqref{eq:12} is a system of linear equations over the complex numbers, the solution polynomial $\lambda(t)$ and the resulting rational curve will be real if the non-real zeros of $\lambda$ come in conjugate complex pairs with equal respective multiplicities.

In order to obtain not only a rational curve but also a rational arc length function we must impose also the rationality of the real integral \eqref{eq:5}. For this purpose the residue vector condition \eqref{eq:12} is to be augmented with the residue of the integrand in \eqref{eq:5}. Quite remarkably this can be done using the same expression as in Section~\ref{sec:second-path}. Defining $\mathcal F(t) \coloneqq \mathcal{A}(t)(1+\qi)\Cj{\mathcal A}(t)$ and considering the Taylor expansion
\begin{equation*}
  \mathcal F(t) = \sum_{j=0}^n \mathcal F_{i,j}(t - \beta_i)^j
\end{equation*}
for $i \in \{1,2,\ldots,m\}$ we obtain the updated zero residue conditions
\begin{equation}
  \label{eq:13}
  \lambda_{i,-1}\mathcal F_{i,0} + \lambda_{i,-2}\mathcal F_{i,1} + \lambda_{i,-3}\mathcal F_{i,2} + \cdots + \lambda_{i,-n-1}\mathcal F_{i,n} = 0,
  \quad i \in \{ 1, 2, \ldots, m\}.
\end{equation}
Note that the vector part of $\mathcal F_{i,j}$ is precisely $\mathbf f_{i,j}$ and thus there is for each $i \in \{1,2,\ldots,m\}$ precisely one additional equation in \eqref{eq:13} comparing to \eqref{eq:12}. This leads to

\begin{theorem}
  \label{th:zero-residue}
  Equation~\eqref{eq:4} yields a rational curve with rational arc length if and only if the rational function $\lambda(t)$ satisfies the zero residue conditions \eqref{eq:13}. All rational curves with rational arc length can be obtained in that way.
\end{theorem}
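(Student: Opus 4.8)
The plan is to split the single quaternionic residue condition \eqref{eq:13} into its scalar and vector parts and to identify each part with the obstruction to rationality of one of the two integrals in Corollary~\ref{cor:1}. The algebraic observation that makes this work is
\begin{equation*}
  \mathcal F(t) = \mathcal A(t)(1+\qi)\Cj{\mathcal A}(t)
  = \mathcal A(t)\Cj{\mathcal A}(t) + \mathcal A(t)\qi\Cj{\mathcal A}(t)
  = N(\mathcal A(t)) + \mathbf F(t),
\end{equation*}
so that $\RE\mathcal F(t) = N(\mathcal A(t)) \in \mathbb{R}[t]$ and $\IM\mathcal F(t) = \mathbf F(t) \in \IM\mathbb{H}[t]$. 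Taking Taylor coefficients at each $\beta_i$ gives $\mathcal F_{i,j} = g_{i,j} + \mathbf f_{i,j}$, where $g_{i,j}$ is the (real) Taylor coefficient of $N(\mathcal A(t))$ and $\mathbf f_{i,j}$ is the vector Taylor coefficient of $\mathbf F(t)$ already appearing in \eqref{eq:12}. Hence \eqref{eq:13} is equivalent to the conjunction of the vector conditions \eqref{eq:12} and the scalar conditions $\sum_{j=0}^{n}\lambda_{i,-j-1}g_{i,j}=0$ for $i \in \{1,\dots,m\}$.

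Next I would invoke two ``residue equals obstruction'' facts. First, by the result of \cite{SchroeckerSir:_optimal_interpolation} recalled above as \eqref{eq:12}, the vector conditions \eqref{eq:12} hold if and only if the integral \eqref{eq:4} is rational; here one uses that $\mathcal A(t)$ is $\qi$-reduced, so that $\mathbf F(t)$ is a polynomial free of real factors and the finite poles of $\lambda(t)\mathbf F(t)$ are exactly the $\beta_i$. Second, a rational function has a rational antiderivative if and only if all of its residues at finite poles vanish; applied to the scalar rational function $\lambda(t)N(\mathcal A(t))$ — whose poles again lie among the $\beta_i$ since $N(\mathcal A(t))$ is a polynomial — this shows that $\int\lambda(t)N(\mathcal A(t))\,\dif t$ is rational precisely when $\sum_{j=0}^{n}\lambda_{i,-j-1}g_{i,j}=0$ for every $i$. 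Because multiplicativity of the norm gives $N(\mathcal A(t)\qi\Cj{\mathcal A}(t)) = N(\mathcal A(t))^2$, hence $\Vert\dot{\mathbf r}(t)\Vert = \vert\lambda(t)\vert\,N(\mathcal A(t))$, on a parameter interval with $\lambda(t)>0$ the arc length is exactly $s(t) = \int\lambda(t)N(\mathcal A(t))\,\dif t = \RE\int\lambda(t)\mathcal F(t)\,\dif t$, so rationality of the arc length is equivalent to the scalar conditions. Combining the two facts yields the stated equivalence: \eqref{eq:13} holds if and only if \eqref{eq:4} is a rational curve with rational arc length.

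For the final sentence I would run Corollary~\ref{cor:1} in reverse. If $\mathbf r(t)$ is any rational curve with rational arc length, then it is a PH curve by Lemma~\ref{lem:PH}, and Corollary~\ref{cor:1} furnishes an $\qi$-reduced $\mathcal A(t)$ and a rational $\lambda(t)$ for which both \eqref{eq:4} and \eqref{eq:5} are rational; by the equivalence just established this $\lambda(t)$ satisfies \eqref{eq:13}, so $\mathbf r(t)$ is produced by the construction.

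The proof involves no deep obstacle; the point needing attention is the pole bookkeeping, namely that multiplying $\lambda(t)$ by the polynomials $\mathbf F(t)$ and $N(\mathcal A(t))$ neither introduces new finite poles nor cancels existing ones, which is exactly where the hypotheses that $\mathcal A(t)$ is $\qi$-reduced and that $\mathbf F(t)$ and $N(\mathcal A(t))$ are polynomials enter, together with the observation (cf.\ the Remark following Corollary~\ref{cor:1}) that sign changes of $\lambda(t)$ occur only at isolated parameters and are irrelevant to rationality of $s(t)$ on the restricted domain.
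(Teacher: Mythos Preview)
Your proposal is correct and follows essentially the same line of reasoning that the paper uses: the paper derives Theorem~\ref{th:zero-residue} from the preceding discussion by observing that $\mathcal F(t)=\mathcal A(t)(1+\qi)\Cj{\mathcal A}(t)$ has real part $\mathcal A(t)\Cj{\mathcal A}(t)$ and vector part $\mathbf F(t)=\mathcal A(t)\qi\Cj{\mathcal A}(t)$, so that \eqref{eq:13} is precisely \eqref{eq:12} augmented, for each pole $\beta_i$, by one scalar residue condition governing the rationality of \eqref{eq:5}. You have spelled out this argument more carefully than the paper does---in particular your explicit invocation of the ``rational antiderivative $\Leftrightarrow$ vanishing residues'' principle for the scalar integrand and your remark on pole bookkeeping make the logic fully transparent---but the underlying idea is identical.
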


Now we are also in a position to prove Theorem~\ref{th:linear-system}:

\begin{proof}[Proof of Theorem~\ref{th:linear-system}]
  Each of the $m$ vector equations in \eqref{eq:13} imposes four linear constraints on the coefficients of $\lambda(t)$ which can be satisfied in a non-trivial way if and only if for at least one value $i \in \{1,2,\ldots,m\}$ the set $\{\mathcal F_{i,0}, \mathcal{F}_{i,1}, \ldots, \mathcal{F}_{i,k_i}\}$ is linearly dependent. This is guaranteed if $k_i \ge 4$ since five quaternions are always linearly dependent over~$\mathbb{C}$.
\end{proof}

\begin{remark}
  \label{rem:dimension}
  The formulations and proofs of Theorems~\ref{th:linear-system} and \ref{th:zero-residue} can be adapted to the case of rational curves with rational arc length in arbitrary dimension with a prescribed (reduced) polynomial tangent field. It turns out that non-polynomial solutions can be obtained only under some specific conditions concerning the multiplicity of the denominator roots. For generic spatial input data this multiplicity must be equal or greater to $4$. It can be less in non-generic cases exhibiting linear dependencies of the derivatives of $\mathbf F$ at the root value.
\end{remark}

\begin{example}
  \label{ex:zero-residue}
  The purpose of this extensive example is manifold. We wish to recover the solution of Example~\ref{ex:linear-system} via the zero residue approach. But we also want to further elucidate the relation between rational PH solutions and solutions that have a rational arc length (and are necessarily PH). In order to do so, we compute bases of the corresponding solution spaces. By $\residue{f(t)}{t_0}$ we denote the residue of the real or vector valued function $f(t)$ at $t = t_0$.

  We define $\mathcal A(t)$ as in Example~\ref{ex:linear-system} and set
  \begin{equation}
    \label{eq:14}
    \lambda(t) = \sum_{\ell=-\infty}^\infty \lambda_\ell t^\ell
  \end{equation}
  for the yet to be determined rational function $\lambda(t)$. Only finitely many of the coefficients $\lambda_\ell$ are different from zero. This choice of $\lambda(t)$ allows to construct rational PH curve where $t = 0$ is the only zero of the denominator polynomial. Thus, there is also just one quaternionic zero residue condition. In case of a rational PH curve, it reads $\residue{\lambda(t)\mathcal{A}(t)\qi\Cj{\mathcal{A}}(t)}{0} = 0$ or, more explicitly,
  \begin{equation}
    \label{eq:15}
    \lambda_{-5} \qi + 2 \lambda_{-4} \qj + \lambda_{-3} (-\qi - 2\qk) + 2 \lambda_{-2} \qk + 2 \lambda_{-1} \qj = 0.
  \end{equation}
  Note that only finitely many unknowns enter this condition. Their total number equals $2\deg\mathcal{A}(t)+1$ in general and five in our example. Moreover, all coefficients are \emph{vectorial quaternions.}

  In case of a rational arc length curve, the zero residue condition reads $\residue{\lambda(t)\mathcal{A}(t)(1+\qi)\Cj{\mathcal{A}}(t)}{0} = 0$ or, more explicitly,
  \begin{equation}
    \label{eq:16}
    \lambda_{-5} (1 + \qi) + 2 \lambda_{-4} \qj + \lambda_{-3} (-\qi - 2\qk + 1) + 2 \lambda_{-2} \qk + \lambda_{-1} (2\qj + 2) = 0.
  \end{equation}
  Here, the coefficients are general quaternions but their vector parts agree with the coefficients in \eqref{eq:15}. The quaternionic Equation~\eqref{eq:15} gives rise to three scalar conditions for the unknowns $\lambda_\ell$, namely,
  \begin{equation}
    \label{eq:17}
    \lambda_{-5} - \lambda_{-3} = 2 \lambda_{-4} + 2 \lambda_{-1} = -2 \lambda_{-3} + 2 \lambda_{-2} = 0.
  \end{equation}
  In case of rational arc length curves, this system is to be augmented with yet another linear equation arising from \eqref{eq:16},
  \begin{equation}
    \label{eq:18}
    \lambda_{-5} + \lambda_{-3} + 2 \lambda_{-1} = 0.
  \end{equation}

  Now we are going to describe basis vectors $\mathbf{p}_\ell(t)$ of the solution space of rational PH curves and basis vectors $\mathbf{r}_\ell(t)$ of the solution space of rational arc length curves.

  In order to obtain solutions $\mathbf{p}_\ell(t)$ to the rational PH curve problem, we solve \eqref{eq:17} for $\lambda_{-5}$, $\lambda_{-4}$, and $\lambda_{-3}$ and obtain
  \begin{equation}
    \label{eq:19}
    \lambda_{-5} = \lambda_{-2},\
    \lambda_{-4} = -\lambda_{-1},\
    \lambda_{-3} = \lambda_{-2}.
  \end{equation}
  For $\ell \in \mathbb{Z} \setminus \{-5, -4, -3\}$ denote by $\mathbf{p}_{\ell}(t)$ the rational curve obtained as
  \begin{equation*}
    \mathbf{p}_\ell(t) = \int \lambda(t) \mathcal{A}(t)\qi\Cj{\mathcal{A}}(t) \dif t
  \end{equation*}
  where $\lambda_k = \delta_k^\ell$ (Kronecker delta) has been substituted into \eqref{eq:14} for $k \notin \{-5, -4, -3\}$ and with \eqref{eq:19} taken into account. A basis for the space of all rational PH curves with quaternionic pre-image $\mathcal{A}(t)$ and the single root $t = 0$ of the denominator polynomial is formed by the curves $\mathbf{p}_\ell$ for $\ell \in \mathbb{Z} \setminus \{-5, -4, -3\}$ plus the three constant ``curves'' $\qi$, $\qj$, $\qk$ whose linear combinations account for the integration constant or translation of solution curves.

  In order to obtain solutions $\mathbf{r}_\ell(t)$ to the rational arc length curve problem, we need to solve simultaneously \eqref{eq:17} and \eqref{eq:18} for $\lambda_{-5}$, $\lambda_{-4}$, $\lambda_{-3}$, and $\lambda_{-2}$. We obtain
  \begin{equation}
    \label{eq:20}
    \lambda_{-5} = -\lambda_{-1},\
    \lambda_{-4} = -\lambda_{-1},\
    \lambda_{-3} = -\lambda_{-1},\
    \lambda_{-2} = -\lambda_{-1}.
  \end{equation}
  For $\ell \in \mathbb{Z} \setminus \{-5, -4, -3, -2\}$ set
  \begin{equation*}
    \mathbf{r}_\ell(t) =
    \int \lambda(t) \mathcal{A}(t)\qi\Cj{\mathcal{A}}(t) \dif t
  \end{equation*}
  where $\lambda_k = \delta_k^\ell$ has been substituted into \eqref{eq:14} for $k \notin \{-5, -4, -3, -2\}$ and with \eqref{eq:20} taken into account. The curves $\mathcal{r}_\ell$ for $\ell \in \mathbb{Z} \setminus \{-5,-4,-3,-2\}$ together with the constant solutions $\qi$, $\qj$, and $\qk$ form a basis for the space of rational arc length curves to $\mathcal{A}(t)$ and the single root $t = 0$ of the denominator polynomial.

  The individual basis vectors/curves can be classified into several groups. For $\ell < -5$, we have $\mathbf{p}(t) = \mathbf{r}(t)$, for example
  \begin{equation}
    \label{eq:21}
    \begin{aligned}
    \mathbf{p}_{-7}(t) = \mathbf{r}_{-7}(t) &=
    -\frac{1}{3t^6}\qj - \frac{2}{5t^5}\qk + \frac{1}{4t^4}(\qi + 2\qk) - \frac{2}{3t^3}\qj - \frac{1}{2t^2}\qi,\\
    \mathbf{p}_{-6}(t) = \mathbf{r}_{-6}(t) &=
    -\frac{2}{5t^5}\qj - \frac{1}{2t^4}\qk + \frac{1}{3t^3}(\qi + 2\qk) - \frac{1}{t^2}\qj - \frac{1}{t}\qi.
    \end{aligned}
  \end{equation}
  For $\ell \ge 0$, we have $\mathbf{p}(t) = \mathbf{r}(t)$ and these curves are the usual \emph{polynomial} solutions, for example
  \begin{equation}
    \label{eq:22}
    \begin{aligned}
      \mathbf{p}_0(t) = \mathbf{r}_0(t) &=
      2t\qj + \qk t^2 - \frac{1}{3}(\qi + 2\qk) t^3 + \frac{1}{2}\qj t^4 + \frac{1}{5}\qi t^5,\\
      \mathbf{p}_1(t) = \mathbf{r}_1(t) &=
      \qj t^2 + \frac{2}{3}\qk t^3 - \frac{1}{4}(\qi + 2\qk)t^4 + \frac{2}{5}\qj t^5 + \frac{1}{6}\qi t^6.
    \end{aligned}
  \end{equation}
  For the curves in both, \eqref{eq:21} and \eqref{eq:22}, we see a typical band structure of non-zero coefficients. The band width is $2\deg\mathcal{A}(t) + 1$.

  The difference between rational PH curves and rational arc length curves lies in the basis vectors
  \begin{equation*}
    \begin{aligned}
    \mathbf{p}_{-2}(t) &=
    -\frac{1}{2t^4}\qj
    -\frac{2}{3t^3}\qk
    +\frac{1}{2t^2}(\qi - 2\qj + 2\qk)
    -\frac{2}{t}(2\qj + \qk)\\
      &\qquad\qquad\qquad
    -(\qi - 2\qj + 2\qk)t
    +\frac{1}{2}(\qi+2\qj)t^2
    +\frac{1}{3}\qi t^3,\\
    \mathbf{p}_{-1}(t) &=
    \frac{2}{3t^3}\qj
    +\frac{1}{t^2}\qk
    -\frac{1}{t}(\qi + 2\qk)
    -(\qi - 2\qk)t
    -\frac{1}{2}(\qi + 2\qk)t^2
    +\frac{2}{3}\qj t^3
    +\frac{1}{4}\qi t^4,\\
    \mathbf{r}_{-1}(t) &=
    \frac{1}{2t^4}\qj
    +\frac{2}{3t^3}(\qj+\qk)
    -\frac{1}{2t^2}(\qi - 2\qj)
    -\frac{1}{t}(\qi - 4\qj)\\
      &\qquad\qquad\qquad
    -2(\qj-2\qk)t
    -(\qi+\qj+\qk)t^2
    -\frac{1}{3}(\qi - 2\qj)t^3
    +\frac{1}{4}\qi t^4.
    \end{aligned}
  \end{equation*}

  The special solution $\mathbf{r}(t)$ of \eqref{eq:10} is a scaled and translated copy of $\mathbf{r}_{-1}(t)$:
  \begin{equation*}
    \mathbf{r}(t) = 300\mathbf{r}_{-1}(t) - 96\qi - 160\qj + 480\qk.
  \end{equation*}
\end{example}

\subsection{Zero Residue Approach of (Farouki and Sakkalis 2019)}
\label{FarSak}

Farouki and Sakkalis in \cite{FaroukiSakkalis2019} give a construction of a class of planar rational curves with rational arc-length function. They also suggest an extension to the spatial PH curves by constructing one example. Similarly to our method they impose a zero residue condition to a suitably chosen derivative vector. In this subsection we make a brief comparison and explain how our method compares to theirs and how it can be used to compute planar curves with rational arc length.

In \cite{FaroukiSakkalis2019} planar rational curves are constructed from three polynomials $u$, $v$, $w$ by integrating the expression
\begin{equation}
  \label{eq:23}
  \Bigl ( \frac{u^2-v^2}{w^2}, \frac{2uv}{w^2} \Bigr ).
\end{equation}
The authors show that if the integral is rational the resulting curves have automatically a rational arc-length function. The residue of \eqref{eq:23} is studied under the assumption of $w$ having only single roots and a vanishing condition is given using certain polynomial divisibility criteria involving $u$, $v$ and $w$. It is shown that this condition can be non-trivially satisfied for sufficiently high polynomial degrees.

\begin{example}
  \label{ex:FaroukiSakkalis2019-b1}
  In \cite[Example~1]{FaroukiSakkalis2019} the following planar curve with rational arc length is constructed
  \begin{equation}
    \label{eq:24}
    \mathbf r(t) = \frac{1}{3t(t-1)}(
        t^5\qi -(4\qi - 3\qj)t^4
      - (3\qi + 9\qj)t^3
      + 7t^2\qi
      + 2t\qi
      - 3\qi + 12\qj
    ).
  \end{equation}
  Let us reconstruct this example using our residue method. It can be done using the polynomials $\mathcal A(t) = t^3 + (\qk - 2)t^2 - \qk t + \qk$ and $\alpha(t) = t(t-1)$. This choice is noteworthy in two regards:
  \begin{itemize}
  \item $\mathcal A(t)$ is in the sub-algebra spanned by $1$ and $\qk$ which is isomorphic to $\mathbb{C}$. Consequently $\mathbf{F}(t) \coloneqq \mathcal{A}(t)\qi\Cj{\mathcal{A}}(t)$ lies in the space spanned by $\qi$, $\qj$ and the resulting PH curves are planar. In fact all planar PH curves can be produced from this kind of restricted quaternion pre-images.
  \item Since $\mathcal{A}(t)$ is $\qi$-reduced, Theorem~\ref{th:linear-system} tells us that non-polynomial solutions are only possible if there are certain linear dependencies between the first coefficients of the Taylor series of $\mathbf{F}(t) \coloneqq \mathcal{A}(t)\qi\Cj{\mathcal{A}}(t)$ at the zeros $t = 0$ or $t = 1$ of $\alpha(t)$. Indeed, we even have
    \begin{equation}
      \label{eq:25}
      \mathcal F(t) = (1-\qi) - 2(1-\qi) t + \mathcal O(t^2),
      \quad
      \mathcal F(t) = 2(1-\qj) + 4(1-\qj)(t - 1) + \mathcal O((t-1)^2).
    \end{equation}
    where $\mathcal{F}(t) \coloneqq \mathcal{A}(t)(1+\qi)\mathcal{A}(t)$. This shows that the first two Taylor coefficients of $\mathcal F(t)$ (and hence also of $\mathbf{F}(t)$) at both zeros are linearly dependent.
  \end{itemize}
  Equation~\eqref{eq:25} suggests the partial fraction decomposition
  \begin{equation*}
    \lambda(t) = \sum_{\ell=-k_0}^{-1} \lambda_{0,\ell}t^\ell + \sum_{\ell=-k_1}^{-1} \lambda_{1,\ell}(t-1)^\ell
  \end{equation*}
  with $k_0$, $k_1 \ge 2$ and yet to be determined coefficients $\lambda_{0,\ell}$, $\lambda_{1,\ell}$ for the rational function $\lambda(t)$ of \eqref{eq:4} and \eqref{eq:5}. With this choice, the zero residue conditions \eqref{eq:13} become
  \begin{equation*}
    (\lambda_{0,-1} - 2\lambda_{0,-2}) (1-\qi) = 0,
    \quad
    2 (\lambda_{1,-1} + 2 \lambda_{1,-2}) (1-\qj) = 0
  \end{equation*}
  and give
  \begin{equation*}
    \lambda(t) = \lambda_{0,-2}\Bigl(\frac{2}{t}+\frac{1}{t^2}\Bigr) + \lambda_{1,-2}\Bigl(\frac{2}{t-1} - \frac{1}{(t-1)^2}\Bigr)
  \end{equation*}
  for arbitrary $\lambda_{0,-2}$, $\lambda_{1,-2} \in \mathbb{R}$. For the sake of a low degree denominator, we choose $k_0 = k_1 = 2$. The general solution is given by the vector part of the integral
  \begin{equation*}
    \mathcal{R}(t) = \int (\lambda_{0,-2}(2t^{-1}+t^{-2}) + \lambda_{1,-2}(2(t-1)^{-1}-(t-1)^{-2})) \mathcal{F}(t) \dif t.
  \end{equation*}
  Two particular solutions are given by $\lambda_{0,-2} = 1$, $\lambda_{1,-2} = 0$ and $\lambda_{0,-2} = 0$, $\lambda_{1,-2} = 1$, respectively:
  \begin{multline*}
    \mathbf{r}_0(t) = \frac{1}{t}\bigl(\qi + (\qi-4\qj)t^2 + (-2\qi-\qj)t^3
    + \tfrac{1}{3}(7\qi+6\qj)t^4 + \tfrac{1}{2}(\qi-5\qj)t^5 +
    \tfrac{1}{5}(-7\qi+4\qj)t^6 + \tfrac{1}{3}\qi t^7
    \bigr),
  \end{multline*}
  \begin{multline*}
    \mathbf{r}_1(t) = 2\qj (t-1)^{-1} - \tfrac{1}{30}(103\qi-111\qj) - (4\qi-6\qj)(t-1)^1 + (\qi+3\qj)(t-1)^2 +\\ \tfrac{10}{3}\qi(t-1)^3 + \tfrac{3}{2}(\qi-\qj)(t-1)^4 + \tfrac{1}{5}(-3\qi-4\qj)(t-1)^5 - \tfrac{1}{3}\qi(t-1)^6.
  \end{multline*}
  The solution \eqref{eq:24} of \cite[Example~1]{FaroukiSakkalis2019} is obtained as $\mathbf{r}(t) = \mathbf{r}_0(t) + \mathbf{r}_1(t) + \frac{1}{3}\qi - 2\qj$.
\end{example}

We see the fundamental difference between our method and \cite{FaroukiSakkalis2019}. While we assume only $\mathcal A(t)$ to be given and describe all possible $\lambda(t)$, Farouki and Sakkalis study the residue condition relating all input polynomials $u$, $v$, $w$. This approach leads to a restriction on $w$ and the resulting curves can have only single roots in the denominator as in \eqref{eq:24}. Our method provides the complete space of solutions and the restriction of \cite{FaroukiSakkalis2019} can be interpreted via a special choice for $\lambda(t)$ by imposing $k_i=2$ in~\eqref{eq:11}.

\begin{example}
  \label{ex:FaroukiSakkalis2019-b2}
  In \cite[Example~9]{FaroukiSakkalis2019} we also find one example of a rational space curve with a rational arc length. We are going to recover it using the zero residue conditions. The denominator polynomial is $\alpha = t(t-1)$ and we should use
  \begin{equation*}
    \mathcal A(t) = (\qi - 2\qj - \qk)t^3 + (-1 -2\qi + 3\qj + 4\qk)t^2
    + (1 - 2\qi + \qj -2\qk)t  - 1 + 2\qi - \qj + 2\qk.
  \end{equation*}
  The remaining computation is quite similar to our Example~\ref{ex:FaroukiSakkalis2019-b1}. Because of
  \begin{equation*}
    \begin{aligned}
      \mathcal{F}(t) &= \mathcal{A}(t)(1+\qi)\Cj{\mathcal{A}}(t) \\
                     &= (10 - 8\qj + 6\qk) + (-20 + 16\qj - 12\qk)t + \mathcal{O}(t^2) \\
                     &= (12-8\qi-4\qj+8\qk) + (24-16\qi-8\qj+16\qk)t + \mathcal{O}((t-1)^2),
    \end{aligned}
  \end{equation*}
  the zero residue constraints
  \begin{equation*}
    \begin{aligned}
      \lambda_{0,-1}(10-8\qj+6\qk) + \lambda_{0,-2}(-20+16\qj-12\qk) &= 0,\\
      \lambda_{1,-1}(12-8\qi-4\qj+8\qk) + \lambda_{1,-2}(24-16\qi-8\qj+16\qk) &= 0
    \end{aligned}
  \end{equation*}
  can be satisfied by the ansatz
  \begin{equation*}
    \lambda = \frac{\lambda_{0,-1}}{t} + \frac{\lambda_{0,-2}}{t^2} + \frac{\lambda_{1,-1}}{t-1} + \frac{\lambda_{1,-2}}{(t-1)^2}.
  \end{equation*}
  and lead to $\lambda_{0,-1} = 2\lambda_{0,-2}$, $\lambda_{1,-1} =
  -2\lambda_{1,-2}$. The particular solutions to $\lambda_{0,-2} = 1$,
  $\lambda_{1,-2} = 0$ and $\lambda_{0,-2} = 0$, $\lambda_{1,-2} = 1$ are
  \begin{multline*}
    \mathbf r_0(t) = \frac{1}{15t}
    \bigl(
    120\qj-90\qk
    -(120\qi-360\qj-30\qk)t^2
    -(30\qi+180\qj-210\qk)t^3 \\
    -(20\qi+80\qj+190\qk)t^4
    -(135\qi-30\qj-90\qk)t^5
    +(108\qi+36\qj+18\qk)t^6
    +(-20\qi-20\qj-10\qk)t^7
    \bigr),
  \end{multline*}
  \begin{multline*}
    \mathbf r_1(t) = \frac{1}{15(t-1)}\bigl(
      120 \qi + 60 \qj - 120 \qk + (287 \qi - 106 \qj + 22 \qk)(t+1) +
      (480 \qi + 60 \qj - 120 \qk) (t - 1)^2\\
      + (150 \qi + 120 \qj - 240 \qk) (t - 1)^3  - (140 \qi + 20 \qj + 160 \qk)
      (t - 1)^4\\  - (105 \qi - 90 \qj + 30 \qk) (t - 1)^5
      + (12 \qi + 84 \qj + 42 \qk) (t - 1)^6  + (20 \qi + 20 \qj + 10 \qk) (t - 1)^7
    \bigr)
  \end{multline*}
  The solution of \cite[Example~9]{FaroukiSakkalis2019} is $\mathbf{r}_0(t) + \mathbf{r}_1(t) + \frac{1}{3}(-10\qi + 38\qj + 4\qk)$.
\end{example}

\section{Third Way -- A Geometric Dual Approach}
\label{sec:geometric-approach}

Computing rational arc length curves by solving the system of linear equations resulting from \eqref{eq:6} or by satisfying the zero residue constraints \eqref{eq:13} are algebraic in nature. In this section we present a predominantly geometric approach that generalizes the construction for planar rational arc length curves of \cite{Pottmann95} to space curves.

In \cite[Theorem~2.3]{Pottmann95}, Pottmann proved that the planar rational curves with rational arc length are precisely the evolutes of planar rational PH curves. Unfortunately, this nice result does not generalize to space curves, at least not in an obvious way. However, Pottmann also describes a construction for planar rational arc length curves in terms of spatial curves of constant slope (defined as curves having constant angle between the tangent vector and a fixed direction). This construction allows for the following spatial generalization:

\begin{theorem}
  \label{th:slope-one}
  The rational curve $\mathbf r(t) \subset \IM \mathbb{H}$ has a piecewise rational arc length function $s(t)$ if and only if it is PH and the curve $\mathcal R(t) \coloneqq s(t) + \mathbf r(t) \subset \mathbb{H}$ is rational and of constant slope $ 1$ with respect to the real coordinate direction.
\end{theorem}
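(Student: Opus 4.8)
The plan is to reduce the claim to the elementary fact that ``constant slope $1$ with respect to the real direction'' is, once the definition is unwound, precisely the differential relation $\dot s(t)^2 = \dot{\mathbf r}(t) \cdot \dot{\mathbf r}(t)$ characterising the arc length function. First I would fix the meaning of the hypothesis on $\mathcal R(t)$: a space curve has constant slope with respect to a fixed direction if its unit tangent encloses a constant angle with that direction, and slope $1$ means this angle equals $\pi/4$. Writing $\mathcal R(t) = s(t) + \mathbf r(t)$ with $s(t) = \RE\mathcal R(t)$ and $\mathbf r(t) = \IM\mathcal R(t)$, the component of $\dot{\mathcal R}(t)$ along the real direction $1 \in \mathbb H$ is $\dot s(t)$, the orthogonal component is $\dot{\mathbf r}(t)$, and $N(\dot{\mathcal R}(t)) = \dot s(t)^2 + \dot{\mathbf r}(t) \cdot \dot{\mathbf r}(t)$. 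Hence $\mathcal R(t)$ has constant slope $1$ if and only if $N(\dot{\mathcal R}(t)) = 2\,\dot s(t)^2$, equivalently $\dot s(t)^2 = \dot{\mathbf r}(t) \cdot \dot{\mathbf r}(t)$.

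For ``$\Rightarrow$'' I would assume that $\mathbf r(t)$ has piecewise rational arc length $s(t)$. Then $\mathbf r(t)$ is PH by Lemma~\ref{lem:PH}, and since $s(t)$ and $\mathbf r(t)$ are both rational, $\mathcal R(t) = s(t) + \mathbf r(t)$ is rational in $\mathbb H$. On any parameter interval on which $\mathbf r$ is regular, $\dot s(t) = \sqrt{\dot{\mathbf r}(t) \cdot \dot{\mathbf r}(t)}$, so by the first paragraph $\mathcal R(t)$ has constant slope $1$.

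For ``$\Leftarrow$'' I would assume that $\mathbf r(t)$ is PH and $\mathcal R(t) = s(t) + \mathbf r(t)$ is rational of constant slope $1$. The first paragraph gives $\dot s(t)^2 = \dot{\mathbf r}(t) \cdot \dot{\mathbf r}(t)$, so on every interval where $\dot s$ has constant sign $s(t)$ coincides, up to sign and an additive constant, with the arc length function of $\mathbf r(t)$; since $s(t) = \RE\mathcal R(t)$ is rational, $\mathbf r(t)$ has piecewise rational arc length. (Rationality of $\mathbf r = \IM\mathcal R$ together with $|\dot s| = \sqrt{\dot{\mathbf r} \cdot \dot{\mathbf r}}$ rational in fact already forces $\mathbf r$ to be PH, so the PH hypothesis on this side is redundant.)

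I do not expect a genuine obstacle. The substance of the proof is the translation of the geometric notion of constant slope into the algebraic identity $\dot s^2 = \dot{\mathbf r} \cdot \dot{\mathbf r}$, together with bookkeeping of signs and of the finitely many exceptional parameter values --- those where $\mathbf r$ fails to be regular, or where $\dot s$ changes sign --- at which the slope is only piecewise defined; restricting the domain to a regular interval with $\dot s > 0$ removes these, exactly as in the remark following Corollary~\ref{cor:1}. The one point I would be careful to state cleanly is why the slope must be exactly $1$ rather than an arbitrary constant: $\mathcal R$ must carry the arc length function itself in its real part, not a scalar multiple of it.
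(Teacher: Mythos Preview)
Your argument is correct and follows essentially the same route as the paper: both directions hinge on translating ``constant slope $1$'' into the identity $\dot s(t)^2 = \dot{\mathbf r}(t)\cdot\dot{\mathbf r}(t)$, together with Lemma~\ref{lem:PH} for the PH property in the forward direction. Your treatment is slightly more explicit about signs and the piecewise nature of the identification, and your closing remark that the PH hypothesis is redundant in the converse is a correct observation the paper does not make.
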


\begin{proof}
  A rational arc length function implies the PH property by Lemma~\ref{lem:PH}. Moreover, the tangent vector $\dot{\mathcal R}(t) = \dot s(t) + \dot{\mathbf{r}}(t) = \sqrt{\dot{\mathbf{r}}(t) \cdot \dot{\mathbf{r}}(t)} + \dot{\mathbf{r}}(t)$ has, indeed, slope $ 1$ with respect to the real coordinate direction.

  If, conversely, $\mathbf{r}(t)$ is PH and $\mathcal{R}(t)$ is of constant slope $ 1$, then the length of $\dot{\mathbf r}(t)$ equals the length of $\dot{s}(t)$. But then $s(t) = \int \dot{s}(t) \dif t = \int \sqrt{\dot{\mathbf r}(t) \cdot \dot{\mathbf r}(t)} \dif t$ is the arc length of $\mathbf{r}(t)$ and it is rational.
\end{proof}

\begin{remark}
  The rational curve $\mathcal R(t) \subset \mathbb{H}[t]$ of constant slope $ 1$ in Theorem~\ref{th:slope-one} is a PH curve as well because $\dot{\mathcal R}(t) \cdot \dot{\mathcal R}(t) = \dot{s}^2 + \dot{\mathbf r}(t) \cdot \dot{\mathbf r}(t) = 2\dot{s}^2$ is a square in~$\mathbb{R}[t]$.
\end{remark}

The basic idea of this section is to compute the rational PH curve $\mathcal R(t) = s(t) + \mathbf r(t)$ of constant slope $ 1$ in $\mathbb{H}$ as envelope of osculating hyperplanes. In this sense, our approach can not only be viewed as a generalization of \cite{pottmann95b} from planar to space curves but also as an extension of the dual approach of \cite{FaroukiSir} from rational PH curves to rational PH curves of constant slope in dimension four. By $\langle \mathbf{x}, \mathbf{y} \rangle = \frac{1}{2}(\mathbf{x}\Cj{\mathbf{y}}+\Cj{\mathbf{y}}\mathbf{x})$ we denote the usual Euclidean scalar product in $\mathbb{R}^4 \cong \mathbb{H}$ or in $\mathbb{R}^3 \cong \IM\mathbb{H}$.

\begin{lemma}
  \label{lem:envelope}
  For a polynomial $\mathcal A(t) \in \mathbb{H}[t]$ define $\mathcal F(t) = \mathcal A(t)(1 + \qi)\Cj{\mathcal A}(t)$. Pick an arbitrary rational function $h(t) \in \mathbb{R}(t)$ and a quaternion polynomial $\mathcal N(t) \in \mathbb{H}[t]$ such that $\langle \mathcal N(t), \mathcal F(t) \rangle = \langle \mathcal N(t), \dot{\mathcal F}(t) \rangle = \langle \mathcal N(t), \ddot{\mathcal F}(t) \rangle = 0$ and consider the one-parameter family of hyperplanes with equation $H(t)\colon \langle \mathcal N(t), \mathcal X \rangle + h(t) = 0$ (where $\mathcal X = x_0 + x_1\qi + x_2\qj + x_3\qk$ denotes an undetermined coordinate vector in $\mathbb H$). Then, $\mathcal R(t) \coloneqq H(t) \cap \dot H(t) \cap \ddot H(t) \cap \dddot H(t)$ is a rational curve of constant slope $ 1$. Conversely, any rational curve $\mathcal R(t)$ of constant slope $ 1$ can be obtained in that way.
\end{lemma}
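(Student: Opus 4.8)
The plan is to show that the envelope point $\mathcal{R}(t)$ always has tangent direction parallel to $\mathcal{F}(t)$, that $\mathcal{F}(t)$ itself has constant slope~$1$, and then to reverse the construction for the converse. For the direct part I would write out the four incidence conditions $\langle \mathcal{N}^{(k)}(t), \mathcal{R}(t)\rangle + h^{(k)}(t) = 0$ for $k = 0,1,2,3$ and, differentiating the $k$-th of these and subtracting the $(k{+}1)$-st, obtain $\langle \mathcal{N}(t), \dot{\mathcal{R}}(t)\rangle = \langle \dot{\mathcal{N}}(t), \dot{\mathcal{R}}(t)\rangle = \langle \ddot{\mathcal{N}}(t), \dot{\mathcal{R}}(t)\rangle = 0$. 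Differentiating the three hypotheses on $\mathcal{N}$ in the same fashion gives $\langle \dot{\mathcal{N}}, \mathcal{F}\rangle = \langle \ddot{\mathcal{N}}, \mathcal{F}\rangle = 0$. Hence both $\dot{\mathcal{R}}(t)$ and $\mathcal{F}(t)$ lie in the orthogonal complement of $\operatorname{span}\{\mathcal{N}(t),\dot{\mathcal{N}}(t),\ddot{\mathcal{N}}(t)\}$, which for generic~$t$ is one-dimensional, so $\dot{\mathcal{R}}(t) = \lambda(t)\mathcal{F}(t)$ for a rational function $\lambda(t)$; moreover $\mathcal{R}(t)$ is a genuine rational curve, solved for from the four linear hyperplane equations by Cramer's rule, as soon as $\mathcal{N},\dot{\mathcal{N}},\ddot{\mathcal{N}},\dddot{\mathcal{N}}$ are in general position. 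That $\mathcal{F}$ has constant slope~$1$ is pure quaternion algebra: $\mathcal{F} = \mathcal{A}\Cj{\mathcal{A}} + \mathcal{A}\qi\Cj{\mathcal{A}} = N(\mathcal{A}) + \mathcal{A}\qi\Cj{\mathcal{A}}$ with $N(\mathcal{A})\ge 0$ real and $\mathcal{A}\qi\Cj{\mathcal{A}}$ vectorial of norm $N(\mathcal{A})^2$, so $|\RE\mathcal{F}(t)| = |\IM\mathcal{F}(t)| = N(\mathcal{A}(t))$, and the same holds for the rational multiple $\dot{\mathcal{R}}(t)$; thus $\mathcal{R}(t)$ has constant slope~$1$.

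For the converse, let $\mathcal{R}(t)\subset\mathbb{H}$ be rational of constant slope~$1$. Then $\mathbf{r}(t) = \IM\mathcal{R}(t)$ is a rational PH curve whose speed $|\dot s(t)|$, with $s(t) = \RE\mathcal{R}(t)$, is rational, so Lemma~\ref{lem:factorization} and the remark following Corollary~\ref{cor:1} give $\dot{\mathbf{r}}(t) = \lambda(t)\mathcal{A}(t)\qi\Cj{\mathcal{A}}(t)$ with $\lambda(t)$ rational; since $\RE\dot{\mathcal{R}}(t) = \pm\lambda(t)N(\mathcal{A}(t))$, after replacing $\mathcal{A}$ by $\mathcal{A}\qj$ and $\lambda$ by $-\lambda$ in the minus case (which changes neither $\dot{\mathbf{r}}$ nor the slope) we obtain $\dot{\mathcal{R}}(t) = \lambda(t)\mathcal{A}(t)(1+\qi)\Cj{\mathcal{A}}(t) = \lambda(t)\mathcal{F}(t)$. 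Now let $\mathcal{N}(t)$ be a polynomial representative of the direction in $\mathbb{H}$ orthogonal to $\mathcal{F}(t),\dot{\mathcal{F}}(t),\ddot{\mathcal{F}}(t)$, for instance their four-dimensional cross product, so the hypothesis on $\mathcal{N}$ holds by construction, and put $h(t)\coloneqq -\langle\mathcal{N}(t),\mathcal{R}(t)\rangle\in\mathbb R(t)$. Then $\mathcal{R}(t)\in H(t)$ by the definition of $h$, and repeatedly differentiating $h$ while using $\langle\mathcal{N},\dot{\mathcal{R}}\rangle = \lambda\langle\mathcal{N},\mathcal{F}\rangle = 0$, $\langle\dot{\mathcal{N}},\dot{\mathcal{R}}\rangle = \lambda\langle\dot{\mathcal{N}},\mathcal{F}\rangle = 0$ and $\langle\ddot{\mathcal{N}},\dot{\mathcal{R}}\rangle = \lambda\langle\ddot{\mathcal{N}},\mathcal{F}\rangle = 0$ (the latter two from the differentiated hypotheses above) shows $\mathcal{R}(t)\in\dot{H}(t)\cap\ddot{H}(t)\cap\dddot{H}(t)$ as well; hence $\mathcal{R}(t)$ is the envelope point.

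The repeated differentiations and the quaternion identity $|\RE\mathcal{F}| = |\IM\mathcal{F}|$ are routine; the one genuine subtlety, which I would flag explicitly, is genericity. In the direct part one needs $\mathcal{N},\dot{\mathcal{N}},\ddot{\mathcal{N}},\dddot{\mathcal{N}}$ to form a basis of $\mathbb{H}$ for general~$t$ so that the four hyperplanes really cut out a point (rather than a larger linear set), and in the converse one needs $\mathcal{F},\dot{\mathcal{F}},\ddot{\mathcal{F}}$ linearly independent so that $\mathcal{N}$ is essentially unique; both conditions hold off a finite set of parameter values in the non-degenerate situation, and I expect the cleanest statement to carry this proviso.
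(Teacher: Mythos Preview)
Your proposal is correct and follows essentially the same route as the paper: differentiate the incidence relations and the orthogonality hypotheses to see that both $\dot{\mathcal R}$ and $\mathcal F$ are orthogonal to $\mathcal N,\dot{\mathcal N},\ddot{\mathcal N}$ (hence parallel), then use $|\RE\mathcal F|=|\IM\mathcal F|=N(\mathcal A)$ for the slope, and reverse via the osculating hyperplane for the converse. The only cosmetic difference is that the paper takes $\mathcal N$ in the converse as the normal to the osculating hyperplane of $\mathcal R$ (orthogonal to $\dot{\mathcal R},\ddot{\mathcal R},\dddot{\mathcal R}$) and then verifies $\langle\mathcal N,\mathcal F^{(k)}\rangle=0$, whereas you build $\mathcal N$ directly from $\mathcal F,\dot{\mathcal F},\ddot{\mathcal F}$; your explicit handling of the sign ambiguity and the genericity proviso is actually more careful than the paper's.
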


\begin{corollary}
  \label{cor:envelope}
  The vector part $\mathbf r(t) = \IM\mathcal R(t)$ of any rational curve $\mathcal R(t)$ constructed as in Lemma~\ref{lem:envelope} has a piecewise rational arc length function. Conversely, any rational curve in $\IM \mathbb{H}$ with this property can be obtained in that way.
\end{corollary}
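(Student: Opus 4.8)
The plan is to obtain Corollary~\ref{cor:envelope} as an almost immediate consequence of Lemma~\ref{lem:envelope} together with Theorem~\ref{th:slope-one}; the only point requiring attention is the identification of the real part of the constructed curve with an arc length function.

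For the direct implication, I would start from a curve $\mathcal{R}(t)$ produced by the envelope construction of Lemma~\ref{lem:envelope}, which that lemma certifies to be rational and of constant slope $1$ with respect to the real coordinate direction. Writing $\mathcal{R}(t) = \rho(t) + \mathbf{r}(t)$ with $\rho(t) \coloneqq \RE\mathcal{R}(t) \in \mathbb{R}(t)$ and $\mathbf{r}(t) \coloneqq \IM\mathcal{R}(t)$, the constant-slope condition reads $\dot{\rho}(t)^2 = \dot{\mathbf{r}}(t)\cdot\dot{\mathbf{r}}(t)$, so $\Vert\dot{\mathbf{r}}(t)\Vert = \vert\dot{\rho}(t)\vert$ is (piecewise) rational and $\mathbf{r}(t)$ is a PH curve. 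On any interval on which $\dot{\rho}$ keeps a constant sign the arc length of $\mathbf{r}(t)$ is $s(t) = \int\Vert\dot{\mathbf{r}}(t)\Vert\,\dif t = \pm\rho(t) + \text{const}$, which is rational; thus $\mathbf{r}(t)$ has a piecewise rational arc length. (Equivalently, $s(t) + \mathbf{r}(t)$ then agrees with $\mathcal{R}(t)$ up to a reflection in the real hyperplane and a real translation, so that it is rational and of constant slope $1$ and Theorem~\ref{th:slope-one} applies directly.)

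For the converse I would take an arbitrary rational curve $\mathbf{r}(t) \subset \IM\mathbb{H}$ with piecewise rational arc length $s(t)$. By Theorem~\ref{th:slope-one} the curve $\mathbf{r}(t)$ is PH and $\mathcal{R}(t) \coloneqq s(t) + \mathbf{r}(t)$ is a rational curve of constant slope $1$; feeding this $\mathcal{R}(t)$ into the converse half of Lemma~\ref{lem:envelope} exhibits it as $H(t) \cap \dot H(t) \cap \ddot H(t) \cap \dddot H(t)$ for a suitable choice of $\mathcal{A}(t)$, $\mathcal{N}(t)$ and $h(t)$, and since $\mathbf{r}(t) = \IM\mathcal{R}(t)$ this is exactly the asserted form. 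The only genuinely delicate bookkeeping is the sign of $\dot{\rho}(t)$ and the consequently \emph{piecewise} nature of the identification $\rho(t) = \pm s(t) + \text{const}$ — which is precisely why the statement, like Corollary~\ref{cor:1} and the remark following it, speaks of a piecewise rational arc length — and it is dealt with by restricting to a parameter interval on which the speed does not vanish and keeps its sign.
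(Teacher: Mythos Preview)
Your proposal is correct and follows exactly the route the paper intends: the paper does not give a separate proof of Corollary~\ref{cor:envelope} at all, treating it as an immediate consequence of Lemma~\ref{lem:envelope} and Theorem~\ref{th:slope-one}, which is precisely what you spell out. Your extra care about the sign of $\dot\rho(t)$ and the resulting piecewise identification $s(t)=\pm\rho(t)+\text{const}$ is a welcome clarification that the paper leaves implicit.
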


\begin{proof}[Proof of Lemma~\ref{lem:envelope}]
  Since $\mathcal{R}(t)$ is obtained by solving a system of linear equations whose coefficients are polynomials, it is rational by construction. Its tangent is the intersection of the hyperplanes $H(t)$, $\dot H(t)$, and $\ddot H(t)$. Taking derivatives on both sides of $\langle \mathcal N(t), \mathcal F(t) \rangle = 0$ and simplifying using the defining conditions for $\mathcal N(t)$ we find
  \begin{equation*}
    \langle \mathcal N(t), \mathcal F(t) \rangle =
    \langle \dot{\mathcal N}(t), \mathcal F(t) \rangle =
    \langle \ddot{\mathcal N}(t), \mathcal F(t) \rangle = 0.
  \end{equation*}
  The solution space of the system of inhomogeneous linear equations (the tangent)
  \begin{equation*}
    \langle \mathcal N(t), \mathcal X \rangle + h =
    \langle \dot{\mathcal N}(t), \mathcal X \rangle + \dot h =
    \langle \ddot{\mathcal N}(t), \mathcal X \rangle + \ddot h = 0
  \end{equation*}
  is parallel to $\mathcal F(t)$ and because $N(\IM\mathcal F(t)) = (\RE\mathcal F(t))^2$ by definition of $\mathcal F(t)$, of constant slope~$1$.

  Conversely, assume that $\mathcal R(t)$ is rational of constant slope $ 1$ with respect to the real coordinate direction. It is obtained as intersection $H(t) \cap \dot H(t) \cap \ddot H(t) \cap \dddot H(t)$, where $H(t)$ is the osculating hyperplane of $\mathcal R(t)$. The hyperplane $H(t)$ contains $\mathcal R(t)$ and is parallel to $\dot{\mathcal R}(t)$, $\ddot{\mathcal R}(t)$, and $\dddot{\mathcal R}(t)$. It has an algebraic equation of type $\langle \mathcal N(t), \mathcal X \rangle + h(t) = 0$ with some polynomial $\mathcal N(t) \in \mathbb{H}[t]$ and some rational function $h(t) \in \mathbb{R}(t)$. By construction, $\langle \mathcal N(t), \dot{\mathcal R}(t) \rangle = \langle \mathcal N(t), \ddot{\mathcal R}(t) \rangle = \langle \mathcal N(t), \dddot{\mathcal R}(t) \rangle = 0$. Moreover, the curve $\mathcal R(t)$ has a polynomial field $\mathcal F(t) \in \mathbb{H}[t]$ of tangent vectors that, because of the constant slope assumption, can be written as $\mathcal F(t) = \mathcal A(t)(1+\qi)\Cj{\mathcal A}(t)$ for some $\mathcal A(t) \in \mathbb{H}(t)$. Because $\mathcal F(t)$ equals $\lambda(t) \dot{\mathcal R}(t)$ for some rational function $\lambda(t)$ we have
  \begin{equation*}
    \begin{aligned}
    \langle \mathcal N, \mathcal F(t) \rangle &= \langle \mathcal N, \lambda(t) \dot{\mathcal R}(t) \rangle = 0,\\
      \langle \mathcal N, \dot{\mathcal F}(t) \rangle &= \langle \mathcal N, \dot\lambda(t) \dot{\mathcal R}(t) + \lambda(t)\ddot{\mathcal R}(t) \rangle = 0,\\
    \langle \mathcal N, \ddot{\mathcal F}(t) \rangle &= \langle \mathcal N, \ddot\lambda(t) \dot{\mathcal R}(t) + 2\dot\lambda(t)\ddot{\mathcal R}(t) + \lambda(t)\dddot{\mathcal R}(t) \rangle = 0.
    \end{aligned}
  \end{equation*}
  This concludes the proof.
\end{proof}

\begin{example}
  \label{ex:envelope}
  We pick the polynomial $\mathcal{A}(t) = t^2 + \qk t + \qi + \qj$ of previous examples and the rational function $h(t) = t^{-1}$.

  A non-trivial solution for $\mathcal N(t)$ is found for $\deg \mathcal N(t) = 6$:
  \begin{multline*}
    \mathcal N(t) =
    (1 -\qi + \qk) t^6
    - 3(1 - \qi) t^5
    + 3(\qj + 2 \qk) t^4
    - (9 - 7 \qi - 8 \qj + 8 \qk) t^3 \\
    + 3(2 - 4 \qi - 2 \qj + \qk) t^2
    + 6 \qi t
    - 1 -\qi + \qj.
  \end{multline*}
  It is unique up to a constant real factor and results in the rational arc length curve
  \begin{multline}
    \label{eq:26}
    \mathbf r(t) =
    \frac{1}{6t^4(t^4+4t^3-6t^2+4t-1)^3}
    \bigl(
      105 \qi t^{12}
      + 168(4 \qi + \qj) t^{11}
      + 14(21 \qi + 74 \qj - 10 \qk) t^{10} \\
      - 60(45 \qi - 10 \qj + 12 \qk) t^9
      + 6(439 \qi - 529 \qj + 30 \qk) t^8
      - 4(181 \qi - 1096 \qj - 760 \qk) t^7\\
      - 12(42 \qi + 161 \qj + 405 \qk) t^6
      + 12(47 \qi - 74 \qj + 310 \qk) t^5
      - 5(51 \qi - 304 \qj + 348 \qk) t^4\\
      + 12(5 \qi - 74 \qj + 44 \qk) t^3
      - 6(\qi - 52 \qj + 16 \qk) t^2
      - 8(8 \qj - \qk) t
      + 6 \qj
    \bigr).
  \end{multline}
  Note the denominator factors of relatively high multiplicity which are in accordance with previous examples. In the light of Examples~\ref{ex:linear-system} and \ref{ex:zero-residue} it is also natural to ask for a choice of $h(t)$ that reproduces \eqref{eq:10} or the curves of \eqref{eq:21}. It is easy to recover the rational function $h(t)$ from a solution curve, for example from the rational arc length curve $\mathbf{r}_{-7}(t)$ given in \eqref{eq:21}:
  \begin{equation*}
    h(t) =
    \frac{-t^3(3t^4+10t^3-12t^2+6t-1)}{30(t^6-3t^5-9t^3+6t^2-1)}
  \end{equation*}
  It is, however, unclear how to select the $h(t)$ a priori so that a ``simple'' solution of type \eqref{eq:21} is obtained.
\end{example}

\section{Comparison of Different Methods}
\label{sec:comparison}

We have presented three methods to compute rational curves with rational arc length and a prescribed polynomial field of tangent vectors, given by a quaternion polynomial $\mathbf{F}(t) = \mathcal{A}(t)\qi\Cj{\mathcal{A}}(t) \subset \IM\mathbb{H}[t]$:
\begin{itemize}
\item In our first approach (Theorem~\ref{th:linear-system}), we prescribe the denominator polynomial $\alpha(t)$. The numerator polynomial is found by solving a system of linear equations.
\item In our second approach (Theorem~\ref{th:zero-residue}), we prescribe the denominator polynomial $\alpha(t)$. The numerator polynomial is found by imposing zero residue constraints on the partial fraction coefficients of the rational function $\lambda(t)$ of Equations~\eqref{eq:4} and~\eqref{eq:5}.
\item In our third approach (Lemma~\ref{lem:envelope}, Corollary~\ref{cor:envelope}), we prescribe a rational function $h(t)$ and compute the curve as the projection of an envelope of hyperplanes in the four-dimensional vector space~$\mathbb{H}$.
\end{itemize}
We will refer to these methods either as ``linear system approach'', ``zero residue approach'' or ``dual approach'' or simply as ``first'', ``second'' and ``third approach''. In what follows, we are going to discuss and compare these three methods with each other and also with prior attempts to construct rational PH curves with rational arc length. Since all three methods can be viewed as generalizations of computation algorithms for rational PH curves \cite{kalkan22,schroecker23,SchroeckerSir:_optimal_interpolation,Pottmann95,FaroukiSir}, the scope of this discussion extends to rational PH curves without the rational arc length constraint.

Let's first talk about specifics of the rational arc length case. The most important advantage of our approaches in that regard is \emph{universality.} All three methods are capable of producing any rational curve with rational arc length. This is not the case for the curves presented in \cite{LEE2014689,FaroukiSakkalis2019} and discussed in \cite{FAROUKI20151}. Not only do we primarily consider spatial curves (with the possibility to specialize for the planar case, cf. Example~\ref{ex:FaroukiSakkalis2019-b1}), we also cover cases with non-simple roots in the denominator. In fact, we demonstrate that for algebraically independent quaternionic pre-image $\mathcal A(t)$ and denominator polynomial $\alpha(t)$, at least one zero of $\alpha(t)$ needs to have multiplicity $n = 4$ in order to allow for non-polynomial PH curves with rational arc length. The multiplicity $n$ is actually related to the dimension $d$ of the ambient space via $n = d + 1$ (cf. Remark~\ref{rem:dimension}). If the rational arc length requirement is dropped, $n = d$ suffices to ensure non-polynomial PH curves.

A remarkable insight is that rational PH curves with or without rational arc length either have algebraically dependent (non-generic) quaternionic pre-image $\mathcal A(t)$ and denominator polynomial $\alpha(t)$ or denominator polynomials $\alpha(t)$ with multiple zeros (non-generic again). Much research in the past \cite{krajnc1,krajnc2,krajnc3,FaroukiSakkalis2019} was devoted to the first of the two cases with the reputable main aim to achieve low curve degrees but usually without mentioning the algebraic dependence. Our first and second approach elucidate this in terms of linear dependencies between low degree Taylor coefficients of $\mathcal{F}(t)$ at zeros of $\alpha(t)$. We nourish some hope that this formulation will allow to simplify otherwise rather complicated constructions for suitable quaternionic pre-image polynomials~$\mathcal{A}(t)$.

A notable difference between our methods to a large body of literature on PH curves (including polynomial curves and curves with non-rational arc length) is that we consider the pre-image polynomial $\mathcal{A}(t)$ to be part of the input while often it is to be computed from certain data. This suggests combined approaches where $\mathcal{A}(t)$ is computed in some way that ensures desirable properties (Hermite interpolation conditions, optimal convergence etc.) and then used as input to our methods. Let us illustrate this at hand of an example:

\begin{example}
  We would like to solve a $C^1$ Hermite interpolation problem. In order to re-use results of Examples~\ref{ex:linear-system} and \ref{ex:zero-residue}, we sample the tangent directions for $t_0 = \frac{1}{2}$, $t_1 = 2$ from $\mathbf{F}(t) = \mathcal{A}(t)\qi\Cj{\mathcal{A}}(t)$ where $\mathcal{A}(t)$ is taken from \eqref{eq:9}. Note that this merely avoids duplicating the computations of Example~\ref{ex:zero-residue}. The presented method is completely general.

  Let us consider the data consisting of the end points
  \begin{equation*}
    \mathbf{p}_0 = 0,\quad \mathbf{p}_1 = \tfrac{1}{10}(3\qi - 70\qj - 9\qk),
  \end{equation*}
  corresponding unit tangent directions
  \begin{equation*}
    \mathbf{d}_0 = \tfrac{1}{37}(3\qi - 36\qj - 8\qk),\quad \mathbf{d}_1 = \tfrac{1}{11}(-6\qi - 9\qj + 2\qk)
  \end{equation*}
  and parameter values $t_0 = \frac{1}{2}$, $t_2 = 2$. It is to be interpolated by a rational arc length curve $\mathbf{r}(t)$. The conditions to be fulfilled are $\mathbf{r}(t_\ell) = \mathbf{p}_\ell$, $\dot{\mathbf{r}}(t_\ell) = \mathbf{d}_\ell$ for $\ell \in \{0,1\}$.

  The first step consists of finding a suitable quaternionic polynomial $\mathcal{A}(t)$ such that $\mathbf{F}(t_\ell)$ is a multiple of $\mathbf{d}_\ell$ for $\ell \in \{0,1\}$. This problem is standard for $G^1$ or $C^1$ interpolation and will not be discussed here, see e.g. \cite{FaroukiSir,krajnc2}. By construction of the data, we can base our solution to the Hermite interpolation problem on the polynomial $\mathcal{A}(t)$ of \eqref{eq:9}.

  In a next step, we define a vector space of rational arc-length curves to $\mathcal{A}(t)$ to pick our solution from. Since $t = 0$ is outside of our interpolation interval $[t_0, t_1]$, we may use the linear span of the rational arc-length curves $\mathbf{r}_{\ell}(t)$, $\ell \in \mathbb{Z} \setminus \{-5, -4, -3, -2\}$ and $\qi$, $\qj$, $\qk$ of Example~\ref{ex:zero-residue}. Imposing the Hermite interpolation conditions results in eight linear constraints, two times three conditions for point interpolation and two times one condition in order to match length of derivative vectors. Thus, a solution of the shape
  \begin{equation*}
    \mathbf{r}(t) =
    \varrho_{-7}\mathbf{r}_{-7}(t) +
    \varrho_{-6}\mathbf{r}_{-6}(t) +
    \varrho_{-1}\mathbf{r}_{-1}(t) +
    \varrho_{0}\mathbf{r}_{0}(t) +
    \varrho_{1}\mathbf{r}_{1}(t) +
    \theta_1 \qi +
    \theta_2 \qj +
    \theta_3 \qk
  \end{equation*}
  can be expected. Indeed, we find
  \begin{gather*}
    \varrho_{-7} = \tfrac{1269492328}{584801613},\quad
    \varrho_{-6} = -\tfrac{2982893698}{584801613},\quad
    \varrho_{-1} = -\tfrac{488565929}{584801613},\quad
    \varrho_{0} = -\tfrac{82611679}{194933871},\quad
    \varrho_{1} = \tfrac{45535106}{194933871},\\
    \theta_1 = -\tfrac{39697738537}{8772024195},\quad
    \theta_2 = -\tfrac{15049793575}{2079294624},\quad
    \theta_3 = \tfrac{130723703147}{46784129040}.
  \end{gather*}
  The resulting curve is depicted in Figure~\ref{fig:hermite}. This example demonstrates that, once $\mathcal{A}(t)$ is suitably chosen, the interpolation problem becomes linear. Thus, it is possible to combine it with optimal interpolation algorithms and shape control (no cusps) of \cite{SchroeckerSir:_optimal_interpolation}. Basically, the rational arc length constraint just adds linear equations to a linear program. Extensions to higher order Hermite interpolation problems and relaxation to $G^1$ interpolation are possible as well.
\end{example}

\begin{figure}
  \centering
  \includegraphics[]{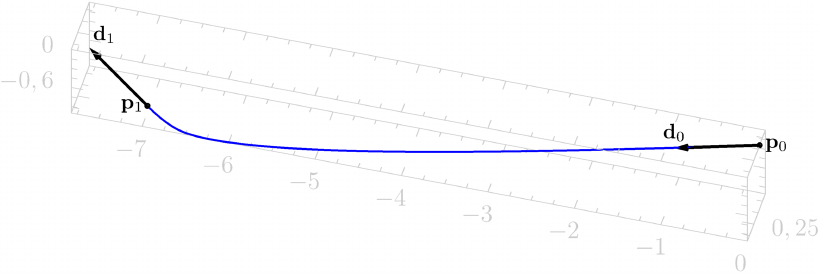}
  \caption{Hermite interpolation by a curve with rational arc length.}
  \label{fig:hermite}
\end{figure}

When comparing our three approaches with each other it is evident that the dual approach allows for little control over the degree of the resulting curve $\mathbf r(t)$ (cf. Example~\ref{ex:envelope}). Both other methods perform better in that regard but generally require that the necessary conditions for existence of rational (non-polynomial) solutions are satisfied. It is not obvious how to obtain polynomial PH curves with the envelope approach while the linear system method yields only polynomial solutions unless the criterion of Theorem~\ref{th:linear-system} is met. With the zero residue approach, both polynomial and non-polynomial curves can be generated systematically by satisfying the zero residue constraints \eqref{eq:13} with $\lambda_{i,k} = 0$ for suitable indices $i$ and~$k$.

Our first or second approach highlight the existence of an underlying vector space of solutions (that is of high importance in \cite{SchroeckerSir:_optimal_interpolation}), as we need to solve a system of linear equations. This vector space is also inherent in the third approach because linear combinations of rational functions $h(t)$ give rise to linear combinations of the resulting solution curves. This vector space structure is typically hidden in more traditional approaches that focus on particular solutions. While \cite[Example~9]{FaroukiSakkalis2019} is an isolated example of a rational space curve with rational arc length, we naturally view it as element of a vector space of dimension two (if translations are factored out) in Example~\ref{ex:FaroukiSakkalis2019-b2}.

In the actual design with PH curves (with or without rational arc length) some sort of shape control is of paramount importance. In particular, it is necessary and notoriously difficult to avoid zeros of the derivative vector $\dot{\mathbf r}(t)$ that generically lead to cusps. These are precisely the zeros of the real polynomial $\mu(t)$ that appears in \eqref{eq:8}, thus showing that it is not just a by-product of the linear system approach but has important meaning. Note that it can also be easily extracted from the zero residue approach. With other approaches, this is not always straightforward. The task can be reduced to computing the $\gcd$ of some real polynomials which is tricky if polynomials are given in numeric form or by some involved non-rational symbolic expressions. This is surprisingly often the case when using literature recommendations for the pre-image polynomial~$\mathcal{A}(t)$.

Summarizing, we acknowledge the benefits of traditional procedures to produce rational PH curves and rational PH curves with rational arc lengths that are of a \emph{low degree.} We consider our methods superior when it comes to more general PH curves and we particularly favor the linear system or zero residue approach when it comes to shape control. In fact, we feel that the zero residue approach allows for the most systematic and clear construction of structured bases for solution vector spaces (cf. \cite[Figure~1]{SchroeckerSir:_optimal_interpolation}).

\section{Conclusion}
\label{sec:conclusion}

We solved by three different methods the open problem of constructing all spatial rational curves with rational arc-length function, discussed their merits and compared them with previous computation attempts. We also provided a new proof of the characterization in Lemma~\ref{lem:factorization}. In the future we plan to use the three methods for solving various problems related to PH curves, such as the construction of smooth closed PH curves with prescribed length. We feel that the rational spatial PH curves are very important in general, because of their ability to represent framing motions of curves with rational trajectories \cite{kalkan22} and further desirable properties.

\bibliographystyle{elsarticle-num}

\end{document}